\newtheorem{teo}{Theorem}
\newtheorem{pro}[teo]{Proposition}
\newtheorem{lem}[teo]{Lemma}
\theoremstyle{definition}
\newtheorem{rem}[teo]{Remark}
\newtheorem{de}[teo]{Definition}
\newtheorem{exa}{Example}
\newcommand{\fq}{\mathbb{F}_q}
\title[EAQECC From RS Codes and BCH Codes with Extension Degree 2]{Entanglement-Assisted Quantum Error Correcting Codes From RS Codes and BCH Codes with Extension Degree 2}
\author{Carlos Galindo, Fernando Hernando and Diego Ruano}
\curraddr{\texttt{Carlos Galindo and Fernando Hernando:} Instituto
Universitario de Matem\'aticas y Aplicaciones de Castell\'on and
Departamento de Matem\'aticas, Universitat Jaume I, Campus de Riu
Sec. 12071 Castell\'{o} (Spain)\\
\texttt{Diego Ruano:} IMUVA-Mathematics Research Institute, Universidad de Valladolid, 47011 Valladolid (Spain).
}
\email{{\rm Galindo:} galindo@uji.es;  {\rm Hernando:} carrillf@uji.es; {\rm Ruano:} diego.ruano@uva.es}
\date{}
\thanks{This work was supported in part by the Spanish MICINN/FEDER grants PGC2018-096446-B-C21, PGC2018-096446-B-C22 and RED2018-102583-T, by the Spanish MINECO grant RYC-2016-20208 (AEI/FSE/UE), by the Junta de CyL (Spain) grant VA166G18, by the Generalitat Valenciana (Spain) grant AICO-2019-223 and by Universitat Jaume I grant P1-1B2018-10.}
\keywords{EAQECC; entanglement-assisted quantum codes; Reed-Solomon codes; BCH codes; subfield subcodes; cyclotomic cosets}
\begin{document}

\begin{abstract}
Entanglement-assisted quantum error correcting codes (EAQECCs) constructed from Reed-Solomon codes and BCH codes are considered in this work. It is provided a complete and explicit formula for the parameters of EAQECCs coming from any Reed-Solomon code, for the Hermitian metric,  and from any BCH code with extension degree $2$ and consecutive cyclotomic cosets, for both the Euclidean and the Hermitian metric. The main task in this work is the computation of a completely general formula for $c$, the minimum number of required maximally entangled quantum states.
\end{abstract}

\maketitle

\section{Introduction}\label{se:uno}

Quantum error correcting codes (QECCs) are mostly defined using classical linear codes \cite{Calderbank,Gottesman}. They were first introduced over the binary field and then the construction was extended to an arbitrary finite field \cite{Ketkar}. Namely, QECCs over $\mathbb{F}_q$, the finite field with $q$ elements, are usually constructed from self-orthogonal classical codes. We can consider classical codes over $\mathbb{F}_q$ if we decide to use the Euclidean metric, and over $\mathbb{F}_{q^2}$, when using the Hermitian metric. 

Brun, Dvetak and Hsieh in \cite{Brun} proposed to share entanglement between encoder and decoder to simplify the theory of quantum error-correction and increase the communication capacity, giving rise to entanglement-assisted quantum error correcting codes (EAQECCs). An important advantage of this new construction is that one may consider an arbitrary classical linear code, without the self-orthogonality restriction. EAQECCs were also first defined over the binary field and then the construction was extended to an arbitrary finite field \cite{QINP3}.

The main difficulty for determining the parameters of an EAQECC, with respect to QECCs,  is the computation of the parameter $c$, the minimum number of required maximally entangled quantum states in $\mathbb{C}^q \otimes \mathbb{C}^q$. Computationally speaking, given a concrete EAQECC, the computation of $c$ is not intense but it is a difficult task to provide a formula for a given family of codes. This is the main goal of the article.

Several articles have explored the construction of EAQECCs from classical linear codes as binary BCH codes \cite{Lu,Lv,Mesnager}, constacyclic codes \cite{Liu,Sari}, constacylic LCD codes \cite{Qian}, cyclic codes \cite{Quian2},  generalized Reed-Solomon codes \cite{Guo2,Guo3,Guo}, negacyclic BCH codes \cite{Chen,Chen2} and algebraic-geometry codes \cite{Pereira2}. These papers address particular cases and determine their parameters but no general formula for the parameters of EAQECCs using the previous families of codes is known. A more general procedure is given in  \cite{Pereira}, which provides parameters for EAQECCs obtained from Reed-Solomon (RS) codes both for the Euclidean and Hermitian metric, however not all RS codes in the Hermitian case are considered. In this regard and in this paper, we give a formula for the parameters of all the EAQECCs that can be obtained from RS codes with respect to the Hermitian metric (see Section \ref{se:cuatro}).

The main aim of this article is to consider BCH codes over an arbitrary field $\mathbb{F}_q$, that can be understood as cyclic codes but also as subfield subcodes of RS codes. We consider BCH codes defined from consecutive cyclotomic cosets and hence we can bound their minimum distance by the well-known BCH bound. We give a complete and explicit formula for the parameters of the EAQECCs coming from {\it any} BCH code (defined by consecutive cyclotomic cosets) with extension degree 2, both for the Euclidean and the Hermitian metric. Moreover, we also determine the parameters when we extend the classical codes by evaluating at zero. With our formulae the reader can easily determine the parameters of the EAQECCs obtained as mentioned.

The computation of $c$ is performed following the geometric decomposition of a linear code \cite{Diego}, that was stated in \cite{QINP3} for EAQECCs. The computations are carried out by a careful analysis of the involved cyclotomic cosets and the $q$-adic decomposition of its elements that allow us to compute the geometric decomposition of a linear code. The geometric decomposition of a linear code endowed with a inner product provides a basis of the code included in a specific basis of the ambient linear space. This last basis $\{v_i\}_{i\in I}$ satisfies that each $v_i$ is either orthogonal to $v_j$ for all $j \neq i$ (symmetric) or  orthogonal to $v_j$ for all $j$ except for a unique reciprocal element $v_k$ with $k \neq i$ (asymmetric). This means that one can determine the hull, or radical, of the linear code easily. In this article, we compute such a decomposition by identifying symmetric and asymmetric cosets (see Definition \ref{def:syasi}) with respect the inner product considered. We notice that the method is similar to the one in \cite{Lu} for quaternary BCH codes.

In Section \ref{se:cinco}, we give a general formula for the parameters of EAQECCs coming from BCH codes with the Euclidean metric and extension degree 2.  We show that they have parameters that are worse than the ones obtained with RS codes with Hermitian metric (studied in Section \ref{se:cuatro}). Namely, they have the same length, $q^2-1$ or $q^2$, and bound for the minimum distance but the dimension is lower for BCH codes with the Euclidean metric. However, BCH codes with the Euclidean metric have still some interest since one may obtain codes whose parameter $c$ is greater than the one obtained with RS codes. In this way, we increase the constellation of known EAQECCs that is  limited at this moment. Moreover, the computation of $c$ for BCH codes with the Euclidean metric helps understanding the computation of $c$ for BCH codes with the Hermitian metric treated in Section \ref{se:seis}, that is rather technical.

Finally, the main results of this article are given in Section \ref{se:seis}, where BCH codes with the Hermitian metric and extension degree 2 are considered. We give a {\it completely} general formula for their parameters. Furthermore, using that formula we present some EAQECCs with good parameters, by giving tables of them over different finite fields whose parameters exceed the Gilbert-Varshamov (GV) bound \cite{QINP3}. We also compare their parameters with the codes available in the literature, when it is possible. We obtain long codes over $\mathbb{F}_q$, with length $q^4-1$ and $q^4$, and with good parameters $[[n,k,d;c]]_q$. Our codes satisfy $k >c$ and thus, they give rise to catalytic quantum codes \cite{catalytic}.  Note that one cannot consider the Singleton bound, $n+2 \ge k + 2d + c$,  for EAQECCs defined over a non-binary field since that bound is only proved for codes defined over a binary field with $d \le (n+2)/2$ \cite{Lai}, despite several articles in the literature consider it for arbitrary fields. 

The article is organized as follows: we introduce RS and BCH codes in Section \ref{se:dos} and EAQECCs coming from them in Section \ref{se:tres}. Section \ref{se:cuatro} is devoted to the study of EAQECCs from RS codes with respect to the Hermitian metric. To conclude, EAQECCs obtained from BCH codes with extension degree 2 are presented in Section \ref{se:cinco} in the Euclidean case and in Section \ref{se:seis} for the Hermitian case.

\section{RS and BCH codes}\label{se:dos}

We introduce RS codes and BCH codes in this section. We regard BCH codes as subfield subcodes of evaluation codes as in \cite{Bier,Cas} instead of as cyclic codes. This construction has the advantage that it can be extended to evaluation by polynomials in several variables.

Let $p$ be a prime number and consider the finite field $\mathbb{F}_{p^\ell}$ with $p^\ell$  elements. Let $n = p^\ell-1$ and $\mathbb{F}_{p^\ell}[X]$ the ring of polynomials in one variable with coefficients in $\mathbb{F}_{p^\ell}$. Consider classes of univariate polynomials in the quotient ring $\mathbb{F}_{p^\ell} [X]/J$, where $J$ is the ideal of $\mathbb{F}_{p^\ell}[X]$ generated by  $X^{n} -1$. Define $$\mathrm{ev}: \mathbb{F}_{p^\ell}[X]/J \rightarrow \mathbb{F}_{p^\ell}^{n}; ~  \;\; \mathrm{ev}(f) = \left(f(P_1), f(P_2), \ldots, f(P_{n}) \right),$$where $Z=\{P_1, P_2, \ldots, P_n\}$ is the zero locus of $J$ in $\mathbb{F}_{p^\ell}$. Let $\Delta$ be a subset of $\mathcal{H}:=\{0,1, \ldots, n-1 \}$. Then the RS code, $D_\Delta \subseteq \mathbb{F}_{p^\ell}^n$, is the code generated by $$\left\{ \mathrm{ev}\left(X^i\right) \; | \; i \in \Delta \right\}.$$Usually one considers $\Delta = \{0,1, \ldots , k-1\}$ and the RS code has parameters $[p^\ell-1,k,p^\ell-k]_q$.  Moreover, one can extend the previous code by evaluating at $0$ as well and therefore, one obtains a code with parameters $[p^\ell,k,p^\ell+1-k]_q$.

Let $r$ be a positive integer such that $r$ divides $\ell$. We consider first codes over the field $\mathbb{F}_{p^\ell}$ and then their subfield subcodes over the field $\mathbb{F}_{p^r}$. BCH codes can be defined as subfield subcodes of the form $D_\Delta \cap (\mathbb{F}_{p^r})^n$ and extend RS codes in the sense that one can consider that a RS code is a BCH with extension degree one, that is, $r=\ell$. This is why we may consider the same notation for both families of codes in this article.

In the ring $\mathbb{Z}_{n}$, we consider minimal cyclotomic cosets with respect to $q=p^r$, minimal means that it contains exactly the elements of the form $xq^t$, $t \geq 0$, in $\mathbb{Z}_{n}$ for some  $x \in \mathbb{Z}_{n}$ under the identification $\mathbb{Z}_{n} = \mathcal{H}$. We denote by $I_x$ the minimal cyclotomic coset $\{ xq^t: t \ge 0 \}$. For every minimal cyclotomic coset, pick its least element, then let $\mathcal{A}$ be the set of all minimal representatives and $\{I_{x}\}_{ x \in \mathcal{A}}$ is the set of minimal cyclotomic cosets with respect to $q$. Moreover, let $i_{x}  := \# (I_{x})$, with $\#$ denoting the cardinality of a set. For convenience, we write $$\mathcal{A} = \{m_0 = 0  < m_1 < m_2 < \cdots \} = \{m_j\}_{j=0}^z.$$

We will use the following two results which can be found in \cite{galindo-hernando,QINP2}.

\begin{pro}
\label{la7}
Set $\Delta = \cup_{j=t'}^t I_{m_j}$, $t' < t$. Then the subfield-subcode of $D_\Delta$ over $\mathbb{F}_q$,
$$
E_\Delta = D_\Delta |_{\mathbb{F}_{q}} = E_\Delta \cap (\mathbb{F}_{q})^n,
$$
has dimension $\sum_{j=t'}^{t}  i_{m_j}$.
\end{pro}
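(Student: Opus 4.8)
Put $m=\ell/r$, so that $p^{\ell}=q^{m}$ and $\mathbb{F}_q\subseteq\mathbb{F}_{p^\ell}$, and write $E_\Delta=D_\Delta\cap\mathbb{F}_q^{n}$. The plan is to establish the two inequalities $\dim_{\mathbb{F}_q}E_\Delta\le\sum_{j=t'}^{t}i_{m_j}$ and $\dim_{\mathbb{F}_q}E_\Delta\ge\sum_{j=t'}^{t}i_{m_j}$ separately. For the first, I would use that $\mathrm{ev}$ is an $\mathbb{F}_{p^\ell}$-linear isomorphism $\mathbb{F}_{p^\ell}[X]/(X^{n}-1)\to\mathbb{F}_{p^\ell}^{n}$, so the vectors $\mathrm{ev}(X^{i})$, $i\in\Delta$, are $\mathbb{F}_{p^\ell}$-linearly independent and $\dim_{\mathbb{F}_{p^\ell}}D_\Delta=\#\Delta=\sum_{j=t'}^{t}i_{m_j}$. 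Since a set of vectors of $\mathbb{F}_q^{n}$ which is linearly independent over $\mathbb{F}_q$ stays linearly independent over the extension $\mathbb{F}_{p^\ell}$, any $\mathbb{F}_q$-basis of $E_\Delta\subseteq D_\Delta$ is an $\mathbb{F}_{p^\ell}$-linearly independent subset of $D_\Delta$, whence $\dim_{\mathbb{F}_q}E_\Delta\le\dim_{\mathbb{F}_{p^\ell}}D_\Delta$, the desired upper bound.

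For the lower bound I would exhibit an explicit $\mathbb{F}_q$-linearly independent family in $E_\Delta$ of size $\sum_{j=t'}^{t}i_{m_j}$. Let $\alpha$ be a primitive $n$-th root of unity used to enumerate $Z=\{P_1,\dots,P_n\}$, say $P_u=\alpha^{u-1}$. Fix a coset $I_{m_j}$ occurring in $\Delta$; its cardinality $i_{m_j}$ is the least $e\ge1$ with $q^{e}\equiv1\pmod{n/\gcd(n,m_j)}$, so the element $\alpha^{m_j}$, whose order equals $n/\gcd(n,m_j)$, has order dividing $q^{i_{m_j}}-1$ and therefore lies in the subfield $\mathbb{F}_{q^{i_{m_j}}}\subseteq\mathbb{F}_{p^\ell}$. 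Choosing an $\mathbb{F}_q$-basis $\xi_{j,1},\dots,\xi_{j,i_{m_j}}$ of $\mathbb{F}_{q^{i_{m_j}}}$, set
$$w_{j,a}=\sum_{s=0}^{i_{m_j}-1}\xi_{j,a}^{\,q^{s}}\,\mathrm{ev}\bigl(X^{m_jq^{s}}\bigr),\qquad 1\le a\le i_{m_j}.$$
Each $w_{j,a}$ is an $\mathbb{F}_{p^\ell}$-combination of the $\mathrm{ev}(X^{i})$ with $i\in I_{m_j}\subseteq\Delta$, hence $w_{j,a}\in D_\Delta$; and its $u$-th coordinate equals $\sum_{s}\bigl(\xi_{j,a}\alpha^{(u-1)m_j}\bigr)^{q^{s}}=\mathrm{Tr}_{\mathbb{F}_{q^{i_{m_j}}}/\mathbb{F}_q}\bigl(\xi_{j,a}\alpha^{(u-1)m_j}\bigr)\in\mathbb{F}_q$, because $\xi_{j,a}\alpha^{(u-1)m_j}\in\mathbb{F}_{q^{i_{m_j}}}$. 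Thus $w_{j,a}\in E_\Delta$.

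Finally I would verify $\mathbb{F}_q$-independence of $\{w_{j,a}:t'\le j\le t,\ 1\le a\le i_{m_j}\}$. Assuming $\sum_{j,a}c_{j,a}w_{j,a}=0$ with $c_{j,a}\in\mathbb{F}_q$ and regrouping by monomials, this reads $\sum_{j}\sum_{s}\bigl(\sum_{a}c_{j,a}\xi_{j,a}^{q^{s}}\bigr)\mathrm{ev}\bigl(X^{m_jq^{s}}\bigr)=0$. As the cosets $I_{m_{t'}},\dots,I_{m_t}$ are pairwise disjoint and, for fixed $j$, the exponents $m_jq^{s}$ with $0\le s<i_{m_j}$ exhaust $I_{m_j}$ without repetition, all the monomials $X^{m_jq^{s}}$ appearing are distinct and their evaluation vectors are $\mathbb{F}_{p^\ell}$-independent; hence $\sum_{a}c_{j,a}\xi_{j,a}^{q^{s}}=0$ for every $j$ and every $s$, and the case $s=0$ (where $\{\xi_{j,a}\}_a$ is a basis of $\mathbb{F}_{q^{i_{m_j}}}$ over $\mathbb{F}_q$) forces all $c_{j,a}=0$. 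This gives $\dim_{\mathbb{F}_q}E_\Delta\ge\sum_{j=t'}^{t}i_{m_j}$, which together with the upper bound yields the claim. The one genuinely delicate ingredient is the membership $\alpha^{m_j}\in\mathbb{F}_{q^{i_{m_j}}}$, i.e. that the coset size $i_{m_j}$ is exactly the degree needed for the coordinates of $w_{j,a}$ to be honest traces landing in $\mathbb{F}_q$; the remainder is elementary linear algebra over subfields and bookkeeping with cyclotomic cosets, and is essentially the argument of \cite{galindo-hernando,QINP2}. (Alternatively, one could deduce the formula from Delsarte's theorem together with the dimension formula for $q$-ary cyclic codes, but the explicit basis above is what will be reused later.)
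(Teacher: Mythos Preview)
Your proof is correct. The paper does not actually prove this proposition; it merely states it and cites \cite{galindo-hernando,QINP2} for the argument, so there is no ``paper's own proof'' to compare against. Your explicit trace-basis construction is exactly the standard approach used in those references (as you yourself note), and all the steps check out: the key point that $i_{m_j}\mid m=\ell/r$ (so that $\mathbb{F}_{q^{i_{m_j}}}\subseteq\mathbb{F}_{p^\ell}$ and the trace is well defined) follows because $q^{m}-1=n$ forces the multiplicative order of $q$ modulo $n/\gcd(n,m_j)$ to divide $m$, and the independence argument via distinct monomials is clean.
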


The forthcoming Proposition \ref{la8} uses duality with respect to the  two metrics we consider in this work. These metrics are induced by two inner products: the Euclidean inner product, where $x\cdot y = \sum_{i=1}^n x_i y_i$ and the source code is defined over the finite field with $q$ elements, and the Hermitian inner product, where $x\cdot y = \sum_{i=1}^n x_i y_i^q$ and the source code is defined over the finite field with $q^2=p^r$ elements (for a suitable $r$ that divides $\ell$); in this last case, we have to replace $q$ with $q^2$ in the above description. Both cases, will allow us to obtain EAQECCs over the finite field with $q$ elements by \cite{QINP3}.

For the sake of clarity, we consider a toy example through the article only for illustrating the main concepts and notation.

\begin{exa}\label{ej:1}
Let $p=2$, $\ell =4$ and $r=2$. We consider the Hermitian inner product, thus $q^2=p^r=4$ and $n=p^\ell -1=15$. The cyclotomomic cosets in $\mathbb{Z}_{15}$ with respect to $4$ are:

$$I_{0}=\{0\},~I_1=\{1,4\},~I_2=\{2,8\},~I_3 = \{3,12\},~I_5= \{5\},$$$$I_6=\{6,9\},~I_7 = \{7,13\},~I_{10}=\{10\},~I_{11}=\{11,14\}.$$

And therefore, the set of all minimal representatives is
$\mathcal{A}=\{ 0,1,2,3,5,6,7,10,11\}$.\qed
\end{exa}

\begin{pro}
\label{la8}
The minimum distance of the (Euclidean or Hermitian) dual code of $E_\Delta$, $C_\Delta$, where $\Delta = \cup_{j=0}^t  I_{m_j}$,  is larger than or equal to $m_{t+1}+1$ (BCH bound).
\end{pro}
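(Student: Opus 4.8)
The plan is to show that $C_\Delta$ is a cyclic code whose defining (zero) set contains a long arithmetic progression, and then to invoke the BCH bound in the following form: if a cyclic code of length $n$ has a defining set containing $\delta-1$ elements $b,b+s,\dots,b+(\delta-2)s$ with $\gcd(s,n)=1$, then its minimum distance is at least $\delta$. (The usual statement is the case $s=1$; the general case reduces to it via the coordinate permutation $X\mapsto X^{s}$.) So it suffices to compute the defining set of $C_\Delta$ and locate such a progression inside it.

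First I would present $E_\Delta$ as a cyclic code. For a primitive $n$-th root of unity $\alpha$ the vectors $\mathrm{ev}(X^i)$ are eigenvectors of the cyclic shift, so $D_\Delta=\langle\mathrm{ev}(X^i):i\in\Delta\rangle$ is cyclic; and since $\sum_{P\in Z}P^{m}$ equals $-1$ when $n\mid m$ and $0$ otherwise, one reads off that the defining set of $D_\Delta$ is $\mathcal H\setminus(-\Delta)$, with $-\Delta=\{n-i\bmod n:i\in\Delta\}$. As $\Delta$ is a union of cyclotomic cosets with respect to $q$ (Euclidean case) or $q^2$ (Hermitian case), the set $\mathcal H\setminus(-\Delta)$ is already closed under multiplication by that base, so the subfield subcode $E_\Delta$ is cyclic with the same defining set (in accordance with $\dim E_\Delta=\dim D_\Delta$, Proposition \ref{la7}). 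Dualising a cyclic code complements the defining set and then applies $j\mapsto -j$ for the Euclidean dual, and $j\mapsto -qj$ for the Hermitian one, because $\langle\mathrm{ev}(X^i),\mathrm{ev}(X^j)\rangle=\sum_{P\in Z}P^{\,i+qj}$ is nonzero only when $n\mid i+qj$. Carrying this out, the defining set of $C_\Delta$ turns out to be $\Delta$ itself in the Euclidean case, and the scaled copy $q\Delta=\{qx\bmod n:x\in\Delta\}$ in the Hermitian case (this is unambiguous since $\Delta$ is $q^2$-closed, whence $q\Delta=q^{-1}\Delta$).

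Next I would use the hypothesis $\Delta=\bigcup_{j=0}^{t}I_{m_j}$ with $m_0<m_1<\cdots$ to note that $\{0,1,\dots,m_{t+1}-1\}\subseteq\Delta$: if $0\le i<m_{t+1}$, then $i$ belongs to some coset $I_{m_j}$ whose representative $m_j=\min I_{m_j}$ satisfies $m_j\le i<m_{t+1}$, forcing $j\le t$ and hence $I_{m_j}\subseteq\Delta$. Therefore the defining set of $C_\Delta$ contains $0,1,\dots,m_{t+1}-1$ in the Euclidean case and $0,q,2q,\dots,(m_{t+1}-1)q$ in the Hermitian case: in both cases a progression of length $m_{t+1}$ with step coprime to $n$ (indeed $\gcd(q,n)=1$, as $q$ is a power of $p$ and $n=p^\ell-1$). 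Applying the BCH bound with $\delta-1=m_{t+1}$ gives $d(C_\Delta)\ge m_{t+1}+1$.

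The power-sum identity and the bookkeeping with $j\mapsto-j$ and $j\mapsto-qj$ on $\mathbb Z_n$ are routine. The only delicate point is the Hermitian case: there the defining set of $C_\Delta$ is the scaled copy $q\Delta$, not $\Delta$ itself, so one must use the arithmetic-progression version of the BCH bound (equivalently, permute coordinates by $X\mapsto X^{s}$ with $qs\equiv1\pmod n$) rather than the bare consecutive-integers version — as the toy example with $n=15$, $q=2$ illustrates, $q\Delta$ need not contain two consecutive integers, so applying the naive count directly to $q\Delta$ would fall short of the claimed bound.
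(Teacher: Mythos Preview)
The paper does not actually prove this proposition in the text; it is quoted from \cite{galindo-hernando,QINP2}. Your argument is correct and is essentially the standard route one would expect in those references: realise $E_\Delta$ as a cyclic code, identify the defining set of its dual as $\Delta$ (Euclidean) or $q\Delta$ (Hermitian), observe that $\{0,1,\dots,m_{t+1}-1\}\subseteq\Delta$ because every $i<m_{t+1}$ lies in a coset whose minimal representative is $\le i<m_{t+1}$, and then invoke the BCH bound. The bookkeeping you sketch for the defining sets (via the power-sum identity and the maps $j\mapsto -j$, $j\mapsto -qj$) is accurate.

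One small simplification you could make in the Hermitian half: over $\mathbb F_{q^2}$ the Hermitian dual is the coordinatewise $q$-th power of the Euclidean dual, so the two duals have identical weight distributions. Hence, once the Euclidean bound $d(E_\Delta^{\perp_e})\ge m_{t+1}+1$ is established, the Hermitian bound follows immediately without needing the arithmetic-progression (step-$q$) form of the BCH bound or the permutation $X\mapsto X^{s}$.
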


Therefore for $\Delta = \Delta(t) = \cup_{j=0}^t  I_{m_j}$, the code $C_\Delta$ is known as a BCH code and it has parameters $[n, n - \sum_{j=0}^{t}  i_{m_j}, \ge m_{t+1}+1 ]_{p^r}$.

For the Euclidean case, set $I_x^\bot := I_{n-x}$ and define
\begin{eqnarray*}
\Delta(t)^\bot & := & (I_{m_0} \cup I_{m_1} \cup \cdots \cup I_{m_t})^\bot\\
 & := & \mathcal{H} \setminus (I_{m_0}^\bot \cup I_{m_1}^\bot \cup \cdots \cup  I_{m_t}^\bot)\\
 & = & \mathcal{H} \setminus (I_{n-m_0} \cup I_{n-m_1} \cup \cdots \cup  I_{n-m_t}).
\end{eqnarray*}

Analogously, for the Hermitian case with base field $\mathbb{F}_{q^2}$, we set $I_x^\bot := I_{n-qx}$ and $\Delta(t)^\bot := \mathcal{H} \setminus (I_{n-qm_0} \cup I_{n-qm_1} \cup \cdots \cup I_{n-qm_t})$. With the above notations, by \cite{QINP,QINP2}, it holds $$C_{\Delta (t)} = E_{\Delta (t)}^\bot = E_{\Delta(t)^\bot}.$$

\begin{exa}\label{ej:2}
This is a continuation of Example \ref{ej:1} with $p=2$, $\ell =4$, $r=2$, $q^2=4$ and $n=15$, we consider the Hermitian inner product. Let $t=6$, then $m_6=7$ and $$\Delta(6)=I_0\cup I_1\cup I_2\cup I_3\cup I_5\cup I_6\cup I_7.$$

Moreover, we have that \begin{eqnarray*}\Delta(6)^\bot & = & \mathcal{H} \setminus (I_{n-0q} \cup I_{n-1q} \cup  I_{n-2q} \cup I_{n-3q} \cup I_{n-5q}  \cup  I_{n-6q} \cup  I_{n-7q})\\
& = & \{0,1, \ldots , 14\} \setminus (I_{0} \cup I_{7} \cup  I_{11} \cup I_{6} \cup I_{5}  \cup  I_{3} \cup  I_{1})\\
& = & I_2 \cup I_{10}.
 \end{eqnarray*}
Therefore, $C_{\Delta (6)}=E_{\Delta(6)^\bot}$ has parameters $[15, 3, 11 ]_4$.\qed

\end{exa}

As for RS codes, we can extend BCH codes by evaluating at 0 as well. Then we obtain a code with parameters $[n+1, n+1 - \sum_{j=0}^{t}  i_{m_j}, \ge m_{t+1}+1 ]_{p^r}$. In this paper, we only consider extension degree 1 or 2 (i.e., $\ell=r$ or $\ell = 2r$) and the length $n$ of our codes will be $p^\ell-1$ or $p^\ell$.

\section{EAQECC}\label{se:tres}

We compute in this section parameters of entanglement-assisted quantum error correcting codes coming from RS codes and BCH codes. As we have mentioned in the previous section, for easing the notation, we consider RS codes as BCH codes with extension degree 1. That is, for RS codes one has that $r=\ell$, $n=p^\ell-1=p^r-1$, and all the cyclotomic cosets have size 1.

From Corollary 1 and Theorem 4 in \cite{QINP3}, we have 

\begin{teo}\label{th:eaqecc}
Let $E$ be a linear code over $\fq$ (over $\mathbb{F}_{q^2}$) with length $n$ and dimension $k$ and let $C$ be its Euclidean (Hermitian) dual code that has minimum distance $d$. Then there exists an EAQECC with parameters $$[[n,n-2k+c,d;c]]_q,$$where $c= \dim E - \dim (E \cap C)$.
\end{teo}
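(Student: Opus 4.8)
The plan is to reduce the statement to the entanglement-assisted stabiliser formalism over $\fq$; indeed the statement is \cite[Corollary~1 and Theorem~4]{QINP3}, so what is really needed is to recall the right ingredients. First I would recall, following \cite{Brun,QINP3}, that every $\fq$-linear subspace $\widetilde{E}\subseteq\fq^{2n}$ carrying the trace-symplectic form $\langle(u\mid v),(u'\mid v')\rangle_{s}=u\cdot v'-v\cdot u'$ gives rise to an EAQECC of length $n$ over $\fq$ whose entanglement parameter is $c=\tfrac12\bigl(\dim\widetilde{E}-\dim(\widetilde{E}\cap\widetilde{E}^{\perp_{s}})\bigr)$, which encodes $\kappa=n-\dim\widetilde{E}+c$ qudits, and whose minimum distance is the least symplectic weight of a vector in $\widetilde{E}^{\perp_{s}}\setminus\widetilde{E}$. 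This is the point at which the geometric decomposition of a linear code \cite{Diego,QINP3} is invoked, since it makes $\widetilde{E}\cap\widetilde{E}^{\perp_{s}}$ (the hull) transparent.

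For the Euclidean case I would fix a full-rank generator matrix $G\in\fq^{k\times n}$ of $E$ and take the CSS-type subspace $\widetilde{E}=\{(aG\mid bG):a,b\in\fq^{k}\}$, which is $\fq$-linear of dimension $2k$. A one-line computation gives $\langle(aG\mid bG),(a'G\mid b'G)\rangle_{s}=a\,M\,b'^{\mathsf T}-b\,M\,a'^{\mathsf T}$ with $M:=GG^{\mathsf T}$, so the induced form on $\widetilde{E}\cong\fq^{k}\oplus\fq^{k}$ is the symplectic form attached to the symmetric matrix $M$, its radical is $\ker M\oplus\ker M$, and hence the rank of the form is $2\operatorname{rank}(GG^{\mathsf T})$. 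This already yields $c=\operatorname{rank}(GG^{\mathsf T})$ and, since $\dim\widetilde{E}=2k$, the dimension $\kappa=n-2k+c$. The key identity I would then prove is $\operatorname{rank}(GG^{\mathsf T})=\dim E-\dim(E\cap C)$: a vector $v\in\fq^{k}$ lies in the left kernel of $GG^{\mathsf T}$ exactly when $vG$ is orthogonal to every row of $G$, i.e.\ when $vG\in E\cap E^{\perp}=E\cap C$; since $v\mapsto vG$ is injective, this kernel has dimension $\dim(E\cap C)$, so $\operatorname{rank}(GG^{\mathsf T})=\dim E-\dim(E\cap C)=c$. Finally, $\widetilde{E}^{\perp_{s}}=\{(u\mid v):u,v\in C\}$, and the standard CSS distance argument gives quantum minimum distance $d=d(C)$.

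For the Hermitian case I would run exactly the same argument through the Hermitian-to-symplectic reduction of \cite{QINP3}: take $G$ over $\fq^{2}$, replace the Euclidean dual by the Hermitian dual $C$ and the matrix $GG^{\mathsf T}$ by $G\,\overline{G}^{\mathsf T}$, where $\overline{G}$ is obtained from $G$ by raising every entry to the $q$-th power. Now $M=G\,\overline{G}^{\mathsf T}$ is Hermitian rather than symmetric, but the rank computation is word-for-word the same and it again gives $c=\dim E-\dim(E\cap C)$, $\kappa=n-2k+c$ and $d=d(C)$.

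I expect the only genuinely delicate points to be (i) the bookkeeping in the stabiliser formalism that converts $(\dim\widetilde{E},c)$ into the announced $(\kappa,c)$, and (ii) checking that the minimum distance is $d(C)$ and not merely bounded below by it — both settled in \cite{QINP3}, which I would simply cite. By contrast, the linear-algebraic heart, $c=\operatorname{rank}(GG^{\mathsf T})=\dim E-\dim(E\cap C)$, is immediate once $\widetilde{E}$ is chosen; and it is exactly this intersection dimension $\dim(E\cap C)$ that will have to be computed, coset by coset, for RS and BCH codes in the sections that follow.
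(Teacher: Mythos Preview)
Your proposal is correct, and in fact you do more than the paper does: the paper offers no proof of this theorem at all, merely prefacing it with ``From Corollary~1 and Theorem~4 in \cite{QINP3}, we have'' and stating the result. Your sketch is a faithful unpacking of precisely that cited argument --- the CSS-type embedding $\widetilde{E}=\{(aG\mid bG)\}$, the identification of the symplectic radical with $\ker(GG^{\mathsf T})\oplus\ker(GG^{\mathsf T})$, and the resulting identity $c=\operatorname{rank}(GG^{\mathsf T})=\dim E-\dim(E\cap C)$ --- so there is no divergence in approach, only in level of detail.
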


This implies that, for $\Delta = \Delta(t)= \cup_{j=0}^t  I_{m_j}$,  there exists an EAQECC with parameters $$[[n,n-2\sum_{j=0}^{t}  i_{m_j}+c,\ge m_{t+1}+1;c]]_q.$$ Hence, the only task for completely determining its parameters remains computing $c$.

It is not feasible to give a formula for $c$ in the general case. The aim of this article is to provide closed formulas for extension degree equal to 1 (RS codes) and 2. Thus, assuming that we do not evaluate at 0, in the first case $\ell=r$, $n=q-1$ when considering codes over $\mathbb{F}_q$ and Euclidean duality and $n=q^2-1$ when we use codes over $\mathbb{F}_{q^2}$ and Hermitian duality, and in the second case $\ell = 2r$, $n=q^2-1$ when we consider subfield subcodes over $\mathbb{F}_q$ from codes over $\mathbb{F}_{q^2}$ and Euclidean duality and $n=q^4-1$ when we use subfield-subcodes over $\mathbb{F}_{q^2}$ from codes over $\mathbb{F}_{q^4}$ and Hermitian duality.

For RS codes, one can find a formula for $c$ in \cite{Pereira} for the Euclidean case and a partial result for the Hermitian case. We give a general formula for the Hermitian case in the next section. First, we need to introduce some notation.

\begin{de}\label{def:syasi}
We say that a minimal cyclotomic coset is symmetric if $I_x^\bot = I_{n-x} = I_{x}$ in the Euclidean case ($I_x^\bot = I_{n-qx} = I_{x}$, in the Hermitian case), and asymmetric otherwise. Let $I_x$ be asymmetric and $I_y = I_{n-x}$ ($I_y = I_{n-qx}$, in the Hermitian case). Moreover, assume that $x$ and $y$ are the minimal representatives of $I_x$ and $I_y$, respectively. Without loss of generality, we may assume that $x<y$, then we say that $I_x$ and $I_y$ are asymmetric reciprocal cosets, that $I_x$ is the first reciprocal asymmetric coset (FR-asymmetric coset) and that $I_y$ is the second reciprocal asymmetric coset (SR-asymmetric coset).
\end{de}

Let $I_R$ consists of the asymmetric cosets in $\Delta$ whose reciprocal coset does not belong to $\Delta$ and $I_L$ of the symmetric cosets in $\Delta$ and the asymmetric cosets whose reciprocal coset belongs to $\Delta$ as well. We have that $\Delta=\Delta(t) = I_R \sqcup I_L$ (that is, $\Delta=I_R \cup I_L$ and $I_R \cap I_L = \emptyset$), and in this way it holds that $E_{I_R} = E_{\Delta}  \cap E_{\Delta}^\bot$ is the hull, or radical, of $E_{\Delta}$ since $E_{\Delta}^\bot = E_{\Delta^\bot}$ (see the paragraph after Proposition \ref{la8}). Note that $I_R$ and $I_L$ are a union of minimal cyclotomic cosets. In this way, the value $c$ is given by $\# I_L$ (see Section 2 in \cite{QINP3} for more details) since$$c  =  \dim E_{\Delta} - \dim \left(E_{\Delta} \cap C_{\Delta} \right)
 =  \# \Delta - \#(\Delta \cap \Delta^\bot)=\# \Delta - \# (I_R) = \# (I_L).$$

For a cyclotomic coset $I_x$ that belongs to $I_R$, we have that $I_x \subseteq \Delta^\bot$ and, on the other hand, for a cyclotomic coset $I_x$ that belongs to $I_L$, we have that $I_x \not\subseteq \Delta^\bot$. Therefore, for obtaining regular quantum codes, one considers a set $\Delta$ such that $I_L =\emptyset$ and hence $E_{\Delta } \subseteq E_{\Delta }^\bot$.  On the contrary, for constructing LCD codes, one considers a set $\Delta$ such that $I_R = \emptyset$ and hence $E_{\Delta } \cap E_{\Delta }^\bot = \{0\}$.

\begin{rem}\label{rem:c}
Notice that since we are considering consecutive minimal cyclotomic cosets for constructing our codes, the cardinality of $I_L$ will be given by the cardinality of the symmetric cosets in  $\Delta$ plus two times the cardinality of the  SR-asymmetric cosets in $\Delta$.
\end{rem}

\begin{exa}\label{ej:3}
This is a continuation of Examples \ref{ej:1} and \ref{ej:2}, with $p=2$, $\ell =4$, $r=2$, $q^2=4$,  $n=15$, $t=6$ and $m_6=7$, where we consider the Hermitian inner product.

The symmetric cosets in $\mathbb{Z}_{15}$ are $I_0=\{0\}$, $I_5=\{5\}$, $I_{10}=\{10\}$ and the pairs of asymmetric cosets are: $I_1=\{1,4\}$ (FR-asymmetric) and $I_7=\{7,13\}$ (SR-asymmetric), $I_2=\{2,8\}$ (FR-asymmetric) and $I_{11}=\{11,14\}$ (SR-asymmetric), and $I_3 = \{3,12\}$ (FR-asymmetric) and $I_6=\{6,9\}$ (SR-asymmetric).

Thus, for $\Delta = \Delta (6) = I_0\cup I_1\cup I_2\cup I_3\cup I_5\cup I_6\cup I_7 = I_R \sqcup I_L$, we have that$$I_R = I_2, \mbox{~and~} I_L =  I_0\cup I_1\cup I_3\cup I_5\cup I_6\cup I_7,$$because $I_L= I_2$ since $I_2$ is an FR-asymmetric coset whose reciprocal coset, $I_{11}$, does not belong to $\Delta$ and because $I_L$ is a union of symmetric cosets, $I_0$ and $I_5$, and pairs of asymmetric cosets, $I_1$ and $I_7$ and $I_3$ and $I_6$ included in $\Delta$. Therefore, $c= \# I_L = 10$. Moreover, by Remark \ref{rem:c}, one can alternatively compute $c$ by considering the cardinality of the symmetric cosets, $I_1$ and $I_5$, plus two times the cardinality of the SR-asymmetric cosets, $I_6$ and $I_7$ in $\Delta$: $1+1+2(2+2)=10$. This second form of computing $c$ is the method that we follow in the rest of the article to obtain a general formula for $c$. In particular, for this code, $c$ may be computed following our forthcoming Theorem \ref{teo:gordo}, case (2), since $(q^4-1)/(q+1)=5\le m_t < q^3+q=10$.

Finally, one has that considering $\Delta=\Delta(6)$, one obtains an EAQECC with parameters $$\left[\left[n,n-2\sum_{j=0}^{t}  i_{m_j}+c,\ge m_{t+1}+1;c\right]\right]_2=\left[\left[15,1,\ge 11; 10\right]\right]_2.$$\qed
\end{exa}

Furthermore, we can also consider RS codes and BCH codes where we evaluate at zero to construct EAQECC, that is, the ideal $J$ at the beginning of Section \ref{se:dos} is generated by $X^{n+1} -X$. In this case, we obtain an EAQECC with parameters$$\left[\left[n+1,n-2\sum_{j=0}^{t}  i_{m_j}+c,\ge m_{t+1}+1;c-1\right]\right]_q,$$because the length is increased by one unit and the parameter $c$ is decreased by one unit since the cyclotomic coset $I_0=\{0\}$ is symmetric when we do not evaluate at zero and it is FR-asymmetric otherwise by \cite{QINP2} (because the reciprocal coset of $I_0 = \{0\}$ is $I_{n}$ that is not contained in $\mathcal{H}$). The dimension of the EAQECC remains therefore the same as before. This allows us to increase the constellation of codes that we may construct.

\section{Hermitian RS codes}\label{se:cuatro}

We give in this section a general formula for $c$ in the case of EAQECCs coming from RS codes with respect to the Hermitian metric.

\begin{teo}
Let $\Delta = \{ 0,1, \ldots , t\} \subseteq \mathcal{H}$ and consider the RS code $D_\Delta = E_\Delta$ over  $\mathbb{F}_{q^2}$. Let $b_0 + b_1 q$ be the $q$-adic expression of $t$, then the parameters of the corresponding EAQECC are:
$$[[q^2-1,(q-b_1)^2-2b_0-2,t +2; b_1^2+1]]_q$$ when $b_0 +b_1 < q-1$. And
$$[[q^2-1,(q-b_1-1)^2, t +2;  b_1^2 + 2(b_0 + b_1 - q ) + 4]]_q$$ otherwise ($b_0 + b_1 \ge  q-1$).
 
\end{teo}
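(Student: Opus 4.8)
The plan is to apply Theorem \ref{th:eaqecc} together with the discussion in Section \ref{se:tres}: since $\Delta=\{0,1,\dots,t\}$, each cyclotomic coset in the RS setting is a singleton $I_x=\{x\}$, so the only quantities to pin down are $\sum_{j=0}^t i_{m_j}=\#\Delta=t+1$ (which gives the dimension $k$ of $E_\Delta$), the BCH bound $m_{t+1}+1=t+2$ for the minimum distance of the dual code $C_\Delta$, and, crucially, the value $c=\#I_L$. The length is $n=q^2-1$. By Remark \ref{rem:c}, $c$ equals the number of symmetric singletons in $\Delta$ plus twice the number of SR-asymmetric singletons in $\Delta$; in the Hermitian RS case $I_x^\bot=I_{n-qx}$, so $x\in\mathcal H$ is symmetric iff $qx\equiv -x$, i.e. $(q+1)x\equiv 0\pmod{q^2-1}$, which (since $\gcd(q+1,q^2-1)=q+1$) happens exactly for the $q+1$ values $x\in\{0,\ k,\ 2k,\dots\}$ with $k=(q^2-1)/(q+1)=q-1$; that is, $x\in\{0,q-1,2(q-1),\dots,q(q-1)\}$. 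For a non-symmetric $x$, writing $x=b_0+b_1q$ with $0\le b_0,b_1\le q-1$, one computes $qx\bmod(q^2-1)=b_1+b_0q$, so the reciprocal of $\{b_0+b_1q\}$ is $\{(q-1-b_1)+(q-1-b_0)q\}=\{(q-1-b_1)+(q-1-b_0)q\}$, and $x$ is the SR (larger) member of its reciprocal pair exactly when $b_0+b_1q>(q-1-b_1)+(q-1-b_0)q$, equivalently $b_0+b_1>q-1$.

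Next I would set up the counting over $\Delta=\{0,1,\dots,t\}$ as a function of $t$ via its $q$-adic digits $t=b_0+b_1q$. The elements of $\Delta$ are precisely all $a_0+a_1q$ with either $a_1<b_1$ and $0\le a_0\le q-1$, or $a_1=b_1$ and $0\le a_0\le b_0$. From the characterizations above: the symmetric elements $j(q-1)=( (q-1)j \bmod q^2-1)$ have digit pair determined by $j$; one counts how many of the $q+1$ symmetric points lie in $\Delta$ as a function of $t$. Likewise one counts SR-asymmetric points $a_0+a_1q$ in $\Delta$, i.e. pairs with $a_0+a_1>q-1$ lying in the triangular/rectangular region describing $\Delta$. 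This is an elementary but careful lattice-point count in the $q\times q$ grid of digit pairs $(a_0,a_1)$, split into the two regimes $b_0+b_1<q-1$ and $b_0+b_1\ge q-1$; the two cases in the statement correspond exactly to whether the "last" partial row $a_1=b_1$ has yet reached the anti-diagonal $a_0+a_1=q-1$ where symmetric and SR points start to appear. Carrying out the count should yield $c=b_1^2+1$ in the first regime and $c=b_1^2+2(b_0+b_1-q)+4$ in the second, after also accounting for the symmetric coset $I_0=\{0\}\in\Delta$ always contributing $1$.

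Finally I would substitute into $[[n,\,n-2k+c,\,d;\,c]]_q$ with $n=q^2-1$, $k=t+1=b_0+b_1q+1$, $d\ge t+2$. In the first regime $n-2k+c=(q^2-1)-2(b_0+b_1q+1)+b_1^2+1 = q^2-2b_1q+b_1^2-2b_0-2=(q-b_1)^2-2b_0-2$, matching the claim; in the second regime a similar substitution with $c=b_1^2+2(b_0+b_1-q)+4$ collapses to $(q-b_1-1)^2$ after expanding $q^2-2(b_0+b_1q)-2+b_1^2+2b_0+2b_1-2q+4 = q^2-2b_1q-2b_1+b_1^2+1 \cdot(\dots)$; I expect the $b_0$ terms to cancel and the remainder to factor as a perfect square, which is a good internal consistency check. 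The main obstacle is the lattice-point counting of SR-asymmetric elements in $\Delta$ and of symmetric elements in $\Delta$: one must be careful that symmetric points are $j(q-1)\bmod(q^2-1)$ rather than having a clean digit pattern, that the reciprocal map $b_0+b_1q\mapsto (q-1-b_1)+(q-1-b_0)q$ is an involution with the diagonal $b_0=q-1-b_1$ (i.e. $a_0+a_1=q-1$) as fixed locus matching the symmetric set, and that the boundary cases $b_0+b_1=q-1$ and $b_1=q-1$ are handled consistently with the stated bound $b_0+b_1<q-1$ versus $\ge q-1$. Once the geometry of the digit grid is set up correctly, everything else is routine algebra.
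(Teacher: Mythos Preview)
Your proposal is correct and follows essentially the same approach as the paper: characterize symmetric singletons by $a_0+a_1=q-1$ (plus $a=0$) and SR-asymmetric singletons by $a_0+a_1>q-1$, then count them in $\Delta$ by splitting into the full rows $a_1<b_1$ and the partial row $a_1=b_1$, and finally substitute into $[[n,n-2k+c,d;c]]_q$. One small remark: your detour through ``symmetric points are the multiples of $q-1$'' is unnecessary, since (as you yourself note later) the nonzero multiples $j(q-1)=(q-j)+(j-1)q$ for $1\le j\le q$ are exactly the points with digit sum $q-1$, so the ``clean digit pattern'' you worry about \emph{does} exist and is precisely the anti-diagonal; the paper uses this directly.
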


\begin{proof}
We start by noticing that, in this case and with the notation as in Section \ref{se:dos}, every considered coset has cardinality one and $m_t =t$. By Theorem \ref{th:eaqecc}, it suffices to compute the value $c$ corresponding to the entanglement and by Remark \ref{rem:c} we need to decide which values $a$ in $\Delta$ determine a symmetric coset (contributing one to $c$) and which ones determine SR-asymmetric cosets (contributing two in the computation of the value $c$).

Let $a \in \Delta$, with $q$-adic expression $a=a_0 + a_1 q$, then $a$ represents an SR-asymmetric (respectively, a symmetric) coset if and only if $a_0 + a_1 < q-1$ (respectively, $a_0 + a_1 = q-1$). Indeed, $a$ represents an SR-asymmetric (respectively, a symmetric) coset if and only $(q^2-1) -qa <a$ (respectively, $(q^2-1) -qa =a$). Noticing that the $q$-adic expression of $q^2-1$ is $(q-1)q + (q-1)$ and that of $qa$ is $qa=a_1 + a_0 q$, the result follows straightforwardly.

Then for computing $c$ we only need to compute twice number of values (in $\Delta$) $a=a_0 + a_1 q$, $0 \leq a_0, a_1 \leq q-1$ such that $a_0+a_1 \geq q- 1$ plus the number of values (in $\Delta$)  $a=a_0 + a_1 q$, $0 \leq a_0, a_1 \leq q-1$ such that $a_0+a_1 = q- 1$.

$\Delta$ contains the number $0$ whose coset is symmetric, thus we add $1$ to our computations.

The number of elements $a$ giving rise to symmetric cosets satisfying $a_1 \leq b_1 -1$ equals to $b_1$ because each value $a_1$ determines the corresponding $a_0$.

Next, we compute how many values $a$ satisfy $a_0+a_1 \geq q- 1$ and $a_1 \leq b_1 -1$. Since for any fixed $a_1$, it holds that $a_0$ satisfies $q-1-a_1 < a_0 \leq q-1$, the number of solutions is $\sum_{a_1=0}^{b_1-1} a_1$. This sum equals $(1/2) b_1 (b_1-1)$ giving a contribution to $c$ of $b_1 (b_1-1)$.
We have proved the first part of the result because we have obtained $c= 1 + b_1 (b_1-1) + b_1 = 1 + b_1^2$. Note that we cannot obtain $a_0 + a_1  \geq q-1$ when $b_0 + b_1 < q-1$ and $a_1=b_1$.

For proving our second statement, where $b_0 + b_1 \geq q-1$, it suffices to add to the value $1 + b_1^2$ twice the number of integers $a$ with $q$-adic expression $a_0 + b_1 q$, $a_0 \leq b_0$, such that $a_0 + b_1 > q-1$ plus the number of integers $a$ with $q$-adic expression $a_0 + b_1 q$ such that $a_0 + b_1 = q-1$, which is exactly one. As a consequence,
\[
c= 1 + b_1^2 +  2(b_0 + b_1 - q +1) + 1 = b_1^2 + 2(b_0 + b_1 - q ) + 4.
\]\end{proof}

Considering the ideal $J$ introduced in Section \ref{se:dos} but now generated by $X^{n+1} -X$ instead of $X^{n} -1$, a very similar argument proves the following result.

\begin{teo}\label{th:rsq2}
Let $\Delta = \{ 0,1, \ldots , t\}  \subseteq \mathcal{H}$ and consider the RS code $D_\Delta = E_\Delta$  over $\mathbb{F}_{q^2}$. Set $t=b_0 + b_1 q$, then the parameters of the corresponding EAQECC are:
$$[[q^2,(q-b_1)^2-2b_0-2,t +2; b_1^2]]_q$$ when $b_0 +b_1 < q-1$. And
$$[[q^2,(q-b_1-1)^2,t +2; b_1^2+ 2(b_0+b_1-q)+3]]_q$$ otherwise ($b_0 + b_1 \ge  q-1$).
\end{teo}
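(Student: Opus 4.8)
The statement concerns RS codes over $\mathbb{F}_{q^2}$ with the Hermitian inner product, but now with the ambient ideal generated by $X^{n+1}-X$, so that we evaluate at $0$ as well and the length is $n+1 = q^2$. The plan is to reuse essentially the entire argument of the preceding theorem verbatim: by Theorem \ref{th:eaqecc} it suffices to compute $c$, and by Remark \ref{rem:c} this amounts to counting, among $a\in\Delta=\{0,1,\dots,t\}$, the symmetric cosets (each contributing $1$) and the SR-asymmetric cosets (each contributing $2$). The single structural change is the one already flagged in Section \ref{se:tres}: when we evaluate at $0$, the coset $I_0=\{0\}$ is no longer symmetric but becomes FR-asymmetric, since its reciprocal coset is $I_{n}$ which is not contained in $\mathcal{H}$. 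Hence the constant contribution of $1$ coming from $a=0$ that appeared in the previous proof must simply be dropped, while every other count is unchanged.

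First I would restate the symmetry criterion from the previous proof: for $a=a_0+a_1q$ with $0\le a_0,a_1\le q-1$, the coset $I_a$ is SR-asymmetric iff $a_0+a_1<q-1$ and symmetric iff $a_0+a_1=q-1$ (this holds verbatim, the length $n=q^2-1$ entering the reciprocal map $a\mapsto n-qa$ being the same; only the status of $a=0$ is handled separately because its reciprocal now falls outside $\mathcal{H}$). Then I would carry out the same two counts as before for $1\le a\le t$: the number of $a$ with $a_1\le b_1-1$ giving a symmetric coset is $b_1$; the number with $a_1\le b_1-1$ and $a_0+a_1\ge q-1$ (the SR-asymmetric ones in that range) is $\sum_{a_1=0}^{b_1-1}a_1 = \tfrac12 b_1(b_1-1)$, contributing $b_1(b_1-1)$ to $c$. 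Summing these gives $c = b_1(b_1-1)+b_1 = b_1^2$ in the case $b_0+b_1<q-1$ — exactly the previous value minus the $1$ from $a=0$.

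For the second case $b_0+b_1\ge q-1$ one adds, as in the previous proof, the contribution of the $a$ with top digit $a_1=b_1$ and $a_0\le b_0$: twice the number with $a_0+b_1>q-1$, namely $2(b_0+b_1-q+1)$, plus the one value with $a_0+b_1=q-1$. This yields
\[
c = b_1^2 + 2(b_0+b_1-q+1) + 1 = b_1^2 + 2(b_0+b_1-q) + 3,
\]
again the previous value minus $1$. Finally I would assemble the parameters: the dimension of $E_\Delta$ is $t+1 = b_0+b_1q+1$, the Hermitian dual has minimum distance $\ge t+2$ by the BCH bound (Proposition \ref{la8}), and Theorem \ref{th:eaqecc} gives length $q^2$, entanglement $c$, and dimension $n-2k+c = q^2 - 2(t+1) + c$; substituting $c=b_1^2$ one checks $q^2-2(b_0+b_1q+1)+b_1^2 = (q-b_1)^2 - 2b_0 - 2$, and substituting the second value of $c$ one gets $q^2 - 2(b_0+b_1q+1) + b_1^2+2(b_0+b_1-q)+3 = (q-b_1-1)^2$, matching the claimed parameters.

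There is essentially no real obstacle here: the only point requiring care — and it is the one I would state explicitly rather than leave implicit — is the change in the symmetry type of $I_0$ when $0$ is added to the evaluation set, which is exactly what produces the uniform shift $c \mapsto c-1$ relative to the previous theorem. Everything else is the identical digit-counting computation, so the write-up can legitimately say "a very similar argument" and only highlight this one difference.
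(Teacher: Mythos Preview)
Your proposal is correct and follows exactly the route the paper indicates: the paper gives no separate proof but simply says ``a very similar argument proves the following result,'' and you have spelled out precisely that argument, pinpointing the single change (the coset $I_0$ becomes FR-asymmetric once $0$ is added to the evaluation set, so the constant $+1$ in $c$ disappears) and verifying the resulting dimension formulas. One cosmetic slip: you restate the criterion as ``SR-asymmetric iff $a_0+a_1<q-1$,'' which reproduces a sign typo from the paper---it should read $a_0+a_1>q-1$---but your actual counts use the correct inequality, so the argument stands.
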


\begin{rem}
In \cite[Theorem 5]{Pereira}, parameters for EAQECCs with length $q^2$ coming from Reed-Solomon codes were given for some intervals. Hence, our Theorem \ref{th:rsq2} extends \cite[Theorem 5]{Pereira} in the sense that we consider all possible cases.

\end{rem}

\section{Euclidean BCH EAQECC with extension degree 2}\label{se:cinco}

We consider in this section BCH codes over $\mathbb{F}_q$ with extension degree  equal to $2$. Hence, the cyclotomic cosets have one or two elements. Namely, $n= q^2-1$, $q=p^r$ and $l=2r$, and the cyclotomic coset whose minimal representative is $x$ is equal to $I_x= \{x,xq\}$. In addition, its reciprocal cyclotomic coset is $I_{n-xq} = \{n-xq,n-x\}$, where $n-xq$ is the minimal representative of $I_{n-qx}$ since $x \le xq$. Moreover, $I_x$ has cardinality $1$ if and only if $x = xq$.  

\begin{lem}\label{lem:resta}
Let $I_x$ be such that $x$ is its minimal representative. One has that 
\begin{itemize}
\item $I_x$ is an FR-asymmetric coset if and only if $n - xq > x$.
\item $I_x$ is a symmetric coset if and only if $n-xq = x$. 
\item $I_x$ is an SR-asymmetric coset if and only if $n - xq < x$.
\end{itemize}
\end{lem}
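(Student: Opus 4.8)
The plan is to unwind the definitions of FR-asymmetric, symmetric, and SR-asymmetric cosets (Definition \ref{def:syasi}) in the concrete setting of extension degree $2$, where $n=q^2-1$ and every coset has the form $I_x=\{x,xq\}$ with $x$ its minimal representative (so $x\le xq$ in $\mathbb Z_n$). By Definition \ref{def:syasi}, the reciprocal of $I_x$ in the Hermitian-type setup used here for Euclidean duality with a degree-$2$ subfield is $I_x^\bot=I_{n-qx}$; since $n-qx$ and $n-x$ are the two elements of $I_{n-qx}$ and $n-x=(n-qx)q \bmod n$, and since $x\le qx$ forces $n-qx\le n-x$, the minimal representative of the reciprocal coset is exactly $n-qx$.

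First I would treat the symmetric case: $I_x$ is symmetric iff $I_{n-qx}=I_x$. Since $I_x=\{x,qx\}$ and $I_{n-qx}=\{n-qx,n-x\}$, and $x$ is minimal in the former while $n-qx$ is minimal in the latter, equality of the two cosets forces equality of their minimal representatives, i.e. $n-qx=x$; conversely if $n-qx=x$ then the two cosets share an element and, being minimal cyclotomic cosets, coincide. This gives the middle bullet.

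Next, for the asymmetric case ($I_x\neq I_{n-qx}$, equivalently $n-qx\neq x$), I would invoke Definition \ref{def:syasi} directly: $I_x$ is FR-asymmetric iff its minimal representative $x$ is smaller than the minimal representative $n-qx$ of the reciprocal coset, i.e. $x<n-qx$, which is $n-qx>x$; and $I_x$ is SR-asymmetric iff $x>n-qx$, i.e. $n-qx<x$. Combining with the symmetric case, the three bullets exhaust and are mutually exclusive because the trichotomy $n-qx>x$, $n-qx=x$, $n-qx<x$ is exhaustive and exclusive in $\mathbb Z$ (reading $n-qx$ as the integer in $\{0,1,\dots,n-1\}$ obtained by reducing $-qx$ modulo $n$).

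The only genuinely non-routine point, and the one I would state carefully, is the claim that $n-qx$ is the \emph{minimal} representative of $I_{n-qx}$; everything else is a direct translation of Definition \ref{def:syasi}. This rests on the two facts already recorded in the paragraph before the lemma: the elements of $I_{n-qx}$ are $n-qx$ and $n-x$, and $x\le qx$ (since $x$ is the minimal representative of $I_x=\{x,qx\}$), whence $n-qx\le n-x$. I would also note the edge case where $qx$ could equal $n$ (i.e.\ $x=0$ after reduction is not an issue since $0$ is handled as in the earlier discussion, $I_0$ being symmetric without evaluation at zero), but for $x\neq 0$ the reductions are as stated and the argument goes through.
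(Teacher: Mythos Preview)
Your argument is correct and follows essentially the same route as the paper's proof: identify $n-qx$ as the minimal representative of the reciprocal coset (using $x\le qx$), then read off the three cases directly from Definition~\ref{def:syasi} by comparing minimal representatives. One small wording issue: you call the reciprocal ``$I_x^\bot=I_{n-qx}$ in the Hermitian-type setup used here for Euclidean duality,'' but in Section~\ref{se:cinco} the duality is Euclidean, so by Definition~\ref{def:syasi} the reciprocal is $I_{n-x}$; it just happens that $I_{n-x}=I_{n-qx}$ because $q(n-x)\equiv n-qx\pmod n$, which is the point made in the paragraph preceding the lemma.
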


\begin{proof}
We have that $I_{n-qx}$ is the reciprocal coset of $I_x$ and that $n-xq$ is the minimal representative of $I_{n-qx}$. Hence, one has that $$I_x = I_{n-x}\mbox{~if~and~only~if~} n-xq =x.$$

Let $I_x$ be asymmetric, then $x \neq n-xq$ (otherwise it would be symmetric). Then $I_{x}$ is an SR-asymmetric cyclotomic coset if $x > n -xq$, otherwise it is an FR-asymmetric cyclotomic coset.

\end{proof}

The following proposition characterizes symmetric and SR-asymmetric cosets by the $q$-adic representation of its minimal representative.

\begin{pro}\label{pro:dec}
Let $I_x$ be such that $x$ is its minimal representative. Let $x=a_0 + a_1 q$ with $0\le a_0,a_1 < q$, the $q$-adic representation of $x$. Then
\begin{itemize}
\item $I_x$ is a symmetric coset if and only if $a_0 + a_1 = q-1$.
\item $I_x$ is an SR-asymmetric coset if and only if $a_0 + a_1 > q-1$.
\end{itemize}
\end{pro}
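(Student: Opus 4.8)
The plan is to translate the arithmetic conditions of Lemma~\ref{lem:resta} into statements about the $q$-adic digits of $x$, using $n = q^2-1$. First I would compute the $q$-adic expansion of $qx$ in terms of the digits of $x$: if $x = a_0 + a_1 q$ with $0 \le a_0, a_1 < q$, then $qx = a_0 q + a_1 q^2 \equiv a_1 + a_0 q \pmod{q^2-1}$, since $q^2 \equiv 1 \pmod{n}$. Hence $qx \bmod n = a_1 + a_0 q$, which is the ``digit-swap'' of $x$. Then $n - qx \equiv (q-1) + (q-1)q - a_1 - a_0 q \pmod n$, and I would check that as an integer in $\{0,1,\dots,n-1\}$ this equals $(q-1-a_1) + (q-1-a_0) q$, at least when $x \neq 0$ (the case $x=0$ gives $n - qx = n \notin \mathcal H$, but $0$ is handled separately as it is symmetric: $a_0 + a_1 = 0 \neq q-1$ forces me to double-check the statement's scope, or note $x=0$ is excluded/treated by convention). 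The point is that $n - qx$ has $q$-adic digits $(q-1-a_0, q-1-a_1)$ in the digit positions, reading the low digit as $q-1-a_1$.

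Next I would compare $n - qx = (q-1-a_1) + (q-1-a_0)q$ with $x = a_0 + a_1 q$ digit-wise. For the \emph{symmetric} case, Lemma~\ref{lem:resta} says $I_x$ is symmetric iff $n - qx = x$; equating the two $q$-adic expressions, which are both genuine $q$-adic expansions with digits in $[0,q-1]$, gives $q-1-a_1 = a_0$ and $q-1-a_0 = a_1$. These two equations are equivalent to the single equation $a_0 + a_1 = q-1$, proving the first bullet. For the \emph{SR-asymmetric} case, Lemma~\ref{lem:resta} says $I_x$ is SR-asymmetric iff $n - qx < x$. Comparing the two $q$-adic expansions, $n - qx < x$ iff the high digit satisfies $q-1-a_1 < a_1$, or the high digits are equal and the low digit satisfies $q-1-a_0 < a_0$; I would argue these combine, via the relation between the two equalities, to the clean condition $a_1 + a_0 > q - 1$. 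Concretely: if $a_0 + a_1 > q-1$ then $q - 1 - a_1 < a_0 \le a_1 + (a_0 - a_1)$, and one must confirm the leading $q$-digit comparison; the symmetric identity $q-1-a_1 = a_0 \iff q-1-a_0 = a_1$ guarantees that the high-digit comparison and low-digit comparison point the same way, so no ``carry-like'' ambiguity arises.

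The main obstacle I anticipate is the bookkeeping around which digit of $n - qx$ is the ``leading'' one and making the digit-wise order comparison of two $q$-adic numbers fully rigorous — specifically, verifying that when $a_0 + a_1 > q-1$ we indeed get $q-1-a_1 < a_1$ (equivalently $a_1 > (q-1)/2$ need not hold, so the comparison must genuinely be done on the pair of digits, not on a single digit), and ruling out the tie-break case cleanly. I would handle this by noting that the three conditions $a_0 + a_1 < q-1$, $a_0 + a_1 = q-1$, $a_0 + a_1 > q-1$ partition all of $\{0,\dots,q-1\}^2$, that Lemma~\ref{lem:resta} partitions all cosets into FR-asymmetric / symmetric / SR-asymmetric, and that I have already matched ``$= q-1$'' with ``symmetric''; so it suffices to show ``$> q-1 \Rightarrow$ SR-asymmetric'' (equivalently $n - qx < x$), after which ``$< q-1 \Rightarrow$ FR-asymmetric'' follows by elimination. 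This reduces the work to a single one-directional digit inequality, which is the cleanest route.
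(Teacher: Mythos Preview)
Your approach via Lemma~\ref{lem:resta} is correct and will go through, but you have made it harder than necessary and there is a digit slip to fix. You computed $n-qx=(q-1-a_1)+(q-1-a_0)q$, so its \emph{high} digit is $q-1-a_0$, not $q-1-a_1$; the high-digit comparison with $x=a_0+a_1q$ is therefore $q-1-a_0<a_1$, which is already exactly $a_0+a_1>q-1$. The tie case $q-1-a_0=a_1$ is precisely $a_0+a_1=q-1$, and then the low digits agree as well, so no tie-break analysis is needed at all. Your closing partition argument is a clean way to wrap up, but with the digit corrected it becomes unnecessary.

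The paper avoids the digit-by-digit comparison entirely with a one-line algebraic shortcut: rewrite $n-qx\lesseqgtr x$ as $n\lesseqgtr x+qx$, and observe (using $q^2\equiv 1\pmod n$) that $x+qx\equiv (a_0+a_1)(1+q)$. Since $n=(q-1)(q+1)$, the comparison collapses to $q-1\lesseqgtr a_0+a_1$. This buys you both bullets simultaneously with no casework and no worry about leading digits; your route works too but is more bookkeeping for the same payoff.
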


\begin{proof}
By Lemma \ref{lem:resta}, we have that $I_x$ is symmetric if and only if $n-xq = x$. That is, if $n - (a_0 + a_1 q)q = a_0 + a_1 q$, that is equivalent to $n = (a_0 + a_1) (q+1)$ since $q^2$ is equivalent to $1$ modulo $q^2-1$. Moreover, since $n=q^2-1$, we have that $I_x$ is a symmetric coset if $a_0 + a_1 = q-1$.

By Lemma \ref{lem:resta}, we have that $I_x$ is an SR-asymmetric coset if and only if $n-xq < x$. That is, if $n - (a_0 + a_1 q)q < a_0 + a_1 q$, that is equivalent to $n < (a_0 + a_1) (q+1)$. Thus, we have that $I_x$ is an SR-symmetric coset if and only if $a_0 + a_1 > q-1$.
\end{proof}

For computing $c$, first we compute how many minimal cyclotomic cosets are symmetric or  SR-asymmetric. Afterwards, we will determine the number of symmetric and FR-asymmetric cosets and their cardinality. The coset $I_0$ is always in $I_L$ since it is symmetric and hence, it will not be considered in the following computations. We have a characterization of the cyclotomic cosets in $I_L$ by Proposition \ref{pro:dec}, namely $a_0 + a_1 \ge q-1$ (i). Moreover, given a cyclotomic coset $I_x$ we should check whether the cyclotomic coset $I_x$ is in $\Delta=\Delta(t) = \cup_{j=0}^{t} I_{m_j}$, that is, we should check that $x \le m_t$ (ii). Finally we should consider that $x$ is a minimal representative of $I_x$, which is equivalent to $a_0 \ge a_1$ (iii), since $x = a_0 + a_1 q \le a_1 + a_0 q = xq$ if and only if $a_0 \ge a_1$.

Summarizing, setting $b_0 + b_1q$ the $q$-adic expression of $m_t$, we have to count the number of elements $a_0 + a_1q$, with $0 \le a_0, a_1 \le q-1$ (and $(a_0,a_1) \neq (0,0)$), such that 
\begin{enumerate}[(i)]
\item $a_0 + a_1 \ge q-1$,
\item $a_0 + a_1 q \le b_0 + b_1 q$, and
\item $a_0 \ge a_1$.
\end{enumerate}

We first consider the case $a_1 < b_1$. Note that in this case, condition (ii) always holds. Our strategy consists of computing the number of pairs that satisfy (i) and then subtracting the number of pairs that do not satisfy (iii). Let $a_1 =i$, with $i \in \{0 , 1, \ldots , b_1-1\}$. We have that (i) is equivalent to $q-1-i \le a_0 \le q-1$ and hence the number of possible values for $a_0$ is $i+1$ and the total number of possible values for $a_0$ and $a_1$ is\begin{equation}\label{for:ec0}\sum_{i=0}^{b_1-1} i+1 = \frac{b_1(b_1+1)}{2}.\end{equation}Now we consider pairs that do not satisfy equation (iii), that is, $a_0 < a_1$. Again, let $a_1 =i$, with $i \in \{0 , 1, \ldots , b_1-1\}$, thus we have that $q-1-i \le a_0 \le i-1$. Note that the previous inequalities are satisfied only when $(i-1) - (q-1-i) +1 = 2i -q +1 \ge 1$, that is, when $i \ge q/2$. This is equivalent to $i \ge q/2$ when the characteristic is 2 and $i \ge (q+1)/2$ when the characteristic is odd. Summarizing, both inequalities are equivalent to $i \ge \lceil q/2  \rceil$ in arbitrary characteristic. Thus, the total number of pairs that satisfy (i) and do not satisfy (iii) is equal to
$$\sum_{i=\lceil q/2  \rceil}^{b_1 -1} 2i -q +1 = \left\{ \begin{array}{l} 0, \mbox{~if~} b_1 \le \lceil q/2  \rceil, \\(b_1 - q/2 )^2, \mbox{~if~} q \mbox{~is~even~and~}b_1>\lceil q/2  \rceil,\\(b_1 - (q+1)/2)(b_1 - (q+1)/2 +1), \mbox{~if~} q \mbox{~is~odd~and~}b_1>\lceil q/2  \rceil. \end{array} \right.  $$Which, in arbitrary characteristic, equals
\begin{equation}\label{for:ec1}
\left( \max\left\{0,b_1 - \left\lceil \frac{q}{2} \right\rceil \right\}\right)\left( b_1 - \left\lfloor \frac{q}{2} \right\rfloor \right).
\end{equation}

Let us now consider the case $a_1 = b_1$. In this case, condition (i) is equivalent to $a_0 \ge q-1 -b_1$, condition (ii) is equivalent to $a_0 \le b_0$, and condition (iii) is equivalent to $a_0 \ge b_1$.  Summarizing we have that $$\max\{q-1 - b_1 , b_1 \} \le a_0 \le b_0.$$Thus, for $a_1=b_1$ the number of pairs that satisfy (i)-(iii) is \begin{equation}\label{for:ec2}\max\{0, b_0 - \max\{q-1 - b_1 , b_1 \} +1 \}.\end{equation}

As mentioned, we have counted the number of cyclotomic cosets that are  symmetric or SR-asymmetric, but we do not know yet their cardinality and how many of them are symmetric and SR-asymmetric. With respect to the number of symmetric cosets of cardinality one, we have that $x = xq$ and $n-x=x$, hence $(q^2-1)-x = x$ that implies that $x= (q^2-1)/2$. Therefore, there is a unique symmetric coset with cardinality one, $I_x$ with $x= (q^2-1)/2$, if the characteristic of $\mathbb{F}_q$ is different from 2. Otherwise, there is no symmetric coset with cardinality one. Therefore, the number of symmetric cosets of cardinality one in $\Delta$ is
\begin{equation}\label{for:ec3} \left\{ \begin{array}{l} 0, \mbox{~if~} q \mbox{~is~even}\\1, \mbox{~if~} q \mbox{~is~odd~and~} m_t\ge (q^2-1)/2. \end{array} \right. \end{equation}

We compute now the number of SR-asymmetric cosets that have cardinality one. For an asymmetric coset (FR or SR) with cardinality one, we have that $x = xq$, hence $x(q-1)$ is zero modulo $q^2-1$, and this implies that $x$ is a multiple of $(q+1)$. Therefore, the number of asymmetric cosets with cardinality one is $\lfloor (m_t)/(q+1) \rfloor$. However, we are interested in the number of SR-asymmetric cosets with cardinality one. To compute it, we should subtract the number of FR-asymmetric cosets with cardinality one. Actually, among the $q-2$ multiples of $q+1$ (we recall that we do not consider $0$) the first half are minimal representatives of FR-asymmetric cosets and the second half are the minimal representatives of SR-asymmetric cosets. Thus, if the characteristic of $\mathbb{F}_q$ is even, there are $(q-2)/2$ FR-symmetric cosets with cardinality one.  If the characteristic of $\mathbb{F}_q$ is odd, we should take into account that the coset $\{(q^2-1)/2\}$ is the only symmetric coset with cardinality one. Hence, there are $(q-3)/2 +1$ FR-asymmetric cosets with cardinality one. Therefore, for arbitrary characteristic, the number of SR-asymmetric cosets with cardinality one in $\Delta$ is \begin{equation}\label{for:ec4}\max\left\{0, \left\lfloor \frac{m_t}{q+1} \right\rfloor - \left\lfloor \frac{q-1}{2} \right\rfloor  \right\}.\end{equation}

Finally we compute the number of symmetric cosets with cardinality 2. In this case, $n - xq = x$, hence $n=x(q+1)$ and therefore $x$ is a multiple of $q-1$. Since we are not considering $0$, there are $q$ possible multiples. However, not all of them are minimal representatives, and only a half of them will  be minimal representatives of a coset. Hence, the number of symmetric cosets with cardinality 2 in $\Delta$ is \begin{equation}\label{for:ec5}\min\left\{\left\lfloor \frac{m_t}{q-1} \right\rfloor, \left\lfloor \frac{q}{2} \right\rfloor \right\}.\end{equation}

Thus, we are ready to state and prove the main result in this section.

\begin{teo}
Consider the BCH code $E_\Delta$ over the field $\mathbb{F}_q$ with extension degree 2 and length $q^2-1$ given by  $\Delta = \Delta (t) = \cup_{j=0}^t  I_{m_j} \subseteq \mathcal{H}$. Set $b_0 + b_1t$ the $q$-adic expression of $m_t$. Then the parameters of the corresponding EAQECC are 

 $$\left[\left[q^2-1,q^2-1-2\sum_{j=0}^{t}  i_{m_j}+c,\ge m_{t+1}+1;c\right]\right]_q,$$ where $c$ is equal to

\begin{dmath*}
c= 1+ 4   \left(\frac{b_1(b_1+1)}{2}-\left( \max\left\{0,b_1 - \left\lceil \frac{q}{2} \right\rceil \right\}\right)\left( b_1 - \left\lfloor \frac{q}{2} \right\rfloor \right)   \\ \hiderel{+} \max\{0, b_0 - \max\{q-1 - b_1 , b_1 \} +1 \} \right) \\ \hiderel{-}3\delta -2 \left( \max\left\{0, \left\lfloor \frac{m_t}{q+1} \right\rfloor - \left\lfloor \frac{q-1}{2} \right\rfloor  \right\} \right) -2 \left( \min\left\{\left\lfloor \frac{m_t}{q-1} \right\rfloor, \left\lfloor \frac{q}{2} \right\rfloor \right\}\right),
\end{dmath*}and where $\delta$ is equal to $1$ if $q$  is odd and $m_t\ge (q^2-1)/2$, and it is equal to $0$ otherwise.





\end{teo}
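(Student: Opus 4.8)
The plan is to reassemble the counts already established in the discussion preceding the statement, keeping track of the cardinalities of the cosets and of the special coset $I_0$. By Theorem~\ref{th:eaqecc} and the paragraph following it, an EAQECC with the displayed $[[\,\cdot\,]]_q$ parameters exists as soon as $c$ is known, so the whole content of the theorem is the closed formula for $c$. By the identity $c=\#(I_L)$ together with Remark~\ref{rem:c}, $c$ equals the total cardinality of the symmetric cosets lying in $\Delta$ plus twice the total cardinality of the SR-asymmetric cosets lying in $\Delta$. Since in the extension-degree-$2$ situation every minimal cyclotomic coset $I_x=\{x,xq\}$ has cardinality $1$ or $2$, I would organize this total by separating the cardinality-$1$ and cardinality-$2$ cosets.

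First I would count the \emph{relevant} cosets, meaning the minimal cyclotomic cosets $I_x\subseteq\Delta$ with $x\neq 0$ that are symmetric or SR-asymmetric. By Proposition~\ref{pro:dec} these are exactly the $x=a_0+a_1q$ with $a_0+a_1\ge q-1$, subject to $x\le m_t$ (so that $I_x\subseteq\Delta$) and $a_0\ge a_1$ (so that $x$ is the minimal representative); this is precisely the count carried out through \eqref{for:ec0}, \eqref{for:ec1} and \eqref{for:ec2}. Writing $N$ for this number,
\[
N=\frac{b_1(b_1+1)}{2}-\Bigl(\max\{0,b_1-\lceil q/2\rceil\}\Bigr)\Bigl(b_1-\lfloor q/2\rfloor\Bigr)+\max\{0,\,b_0-\max\{q-1-b_1,b_1\}+1\}.
\]

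Next I would split $N$ according to type and cardinality: let $s_1,s_2,a'_1,a'_2$ denote, among these $N$ cosets, the number of symmetric cosets of cardinality $1$, of symmetric cosets of cardinality $2$, of SR-asymmetric cosets of cardinality $1$, and of SR-asymmetric cosets of cardinality $2$, respectively; these four classes partition the $N$ relevant cosets. Equations \eqref{for:ec3}, \eqref{for:ec5} and \eqref{for:ec4} give $s_1=\delta$, $s_2=\min\{\lfloor m_t/(q-1)\rfloor,\lfloor q/2\rfloor\}$ and $a'_1=\max\{0,\lfloor m_t/(q+1)\rfloor-\lfloor(q-1)/2\rfloor\}$, whence $a'_2=N-s_1-s_2-a'_1$. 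Finally I would assemble: the coset $I_0=\{0\}$ is symmetric of cardinality $1$ and always belongs to $I_L$, contributing $1$; each remaining symmetric coset contributes its cardinality and each SR-asymmetric coset contributes twice its cardinality, so
\[
c=1+s_1+2s_2+2a'_1+4a'_2=1+4N-3s_1-2s_2-2a'_1,
\]
and inserting the four closed formulas above yields exactly the stated expression.

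The steps requiring care — all already handled in the text before the statement — are the parity bookkeeping and the minimality condition. For parity: a symmetric coset of cardinality $1$ can only be $\{(q^2-1)/2\}$, which exists only in odd characteristic and lies in $\Delta$ only when $m_t\ge(q^2-1)/2$, which is what the indicator $\delta$ records; and among the $q-2$ nonzero multiples of $q+1$, the smaller half are FR-asymmetric and the larger half SR-asymmetric, with a one-coset correction in odd characteristic coming again from $\{(q^2-1)/2\}$. For minimality: the constraint $a_0\ge a_1$ must be imposed on top of $a_0+a_1\ge q-1$ and $x\le m_t$, which is the source of the subtraction in \eqref{for:ec1} and of the nested maxima in \eqref{for:ec2}. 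Beyond checking the boundary regimes (small $m_t$, $b_1$ near $q/2$, $b_0$ near $q-1-b_1$) where the various $\max\{0,\cdot\}$ and $\min\{\cdot,\cdot\}$ become active, the theorem is then a purely formal combination, so I expect no further obstacle.
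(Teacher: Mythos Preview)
Your proposal is correct and follows essentially the same approach as the paper: both take $N=(\ref{for:ec0})-(\ref{for:ec1})+(\ref{for:ec2})$ as the number of relevant nonzero cosets, multiply by $4$ as if every such coset were SR-asymmetric of cardinality $2$, and then correct by $-3$ for each symmetric singleton, $-2$ for each SR-asymmetric singleton, and $-2$ for each symmetric coset of size $2$, arriving at $c=1+4N-3\,(\ref{for:ec3})-2\,(\ref{for:ec4})-2\,(\ref{for:ec5})$. Your explicit partition $N=s_1+s_2+a'_1+a'_2$ and the identity $1+s_1+2s_2+2a'_1+4a'_2=1+4N-3s_1-2s_2-2a'_1$ is just a slightly more bookkept version of the paper's ``assume all contribute $4$, then adjust'' argument.
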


\begin{proof}
By Remark \ref{rem:c}, we have that $c$ is equal to the sum of the cardinality of the symmetric cosets in $\Delta$ plus two times the cardinality of the SR-asymmetric cosets in $\Delta$.

The coset $I_0 = \{0\}$ is symmetric and it always contributes with 1 to the value of $c$. Consider the above referenced values from (\ref{for:ec0}) to (\ref{for:ec5}). Then (\ref{for:ec0})$-$(\ref{for:ec1})$+$(\ref{for:ec2}) are the number of symmetric and SR-asymmetric cosets in $\Delta$. If they were all SR-asymmetric cosets with cardinality 2, they would contribute with $4$ times (\ref{for:ec0})$-$(\ref{for:ec1})$+$(\ref{for:ec2}) to the value $c$. However, this may not be the case and we should adjust the previous computation.

Note that a coset with cardinality one contributes 1 to $c$ if it is symmetric and it contributes 2 to $c$ if it is SR-asymmetric. Finally, a coset with cardinality 2 that is symmetric contributes 2 to $c$.

Therefore,  $$c= 1+ 4 \times ((\ref{for:ec0})-(\ref{for:ec1})+(\ref{for:ec2})) -3\times(\ref{for:ec3})-2\times(\ref{for:ec4})-2\times (\ref{for:ec5}),$$and the result holds.
\end{proof}

If one constructs the above codes as described in Section \ref{se:dos} but with the ideal $J$ generated by $X^{n+1} -X$ instead of $X^{n} -1$, a very similar argument proves the following result.

\begin{teo}
Consider the BCH code $E_\Delta$ over the field $\mathbb{F}_q$ with extension degree 2 and length $q^2$ given by  $\Delta = \Delta (t) = \cup_{j=0}^t  I_{m_j} \subseteq \mathcal{H}$. Set $b_0 + b_1t$ the $q$-adic expression of $m_t$. Then the parameters of the corresponding EAQECC are 

 $$\left[\left[q^2,q^2-2\sum_{j=0}^{t}  i_{m_j} +c,\ge m_{t+1}+1;c\right]\right]_q,$$ where $c$ is equal to

\begin{dmath*}
c= 4   \left(\frac{b_1(b_1+1)}{2}-\left( \max\left\{0,b_1 - \left\lceil \frac{q}{2} \right\rceil \right\}\right)\left( b_1 - \left\lfloor \frac{q}{2} \right\rfloor \right)   \\ \hiderel{+} \max\{0, b_0 - \max\{q-1 - b_1 , b_1 \} +1 \} \right) \\ \hiderel{-}3\delta -2 \left( \max\left\{0, \left\lfloor \frac{m_t}{q+1} \right\rfloor - \left\lfloor \frac{q-1}{2} \right\rfloor  \right\} \right) -2 \left( \min\left\{\left\lfloor \frac{m_t}{q-1} \right\rfloor, \left\lfloor \frac{q}{2} \right\rfloor \right\}\right),
\end{dmath*}and where $\delta$ is equal to $1$ if $q$  is odd and $m_t\ge (q^2-1)/2$, and it is equal to $0$ otherwise.

\end{teo}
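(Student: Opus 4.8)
The plan is to follow the proof of the preceding theorem essentially line by line, isolating the single place where replacing the length $q^2-1$ by $q^2$ alters the bookkeeping. Exactly as there, by Remark \ref{rem:c} the value $c$ equals the number of symmetric cosets contained in $\Delta$ plus twice the number of SR-asymmetric cosets contained in $\Delta$, and the quantities (\ref{for:ec0})--(\ref{for:ec5}) already classify the relevant \emph{nonzero} cosets: (\ref{for:ec0})$-$(\ref{for:ec1})$+$(\ref{for:ec2}) is the number of cosets in $\Delta$ that are symmetric or SR-asymmetric, (\ref{for:ec3})---which coincides with the $\delta$ of the statement---and (\ref{for:ec4}) count the cardinality-one symmetric and SR-asymmetric cosets, and (\ref{for:ec5}) counts the cardinality-two symmetric cosets. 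None of these five counts, nor the $q$-adic and divisibility arguments behind them, involves the evaluation point $0$, and the nonzero evaluation points together with the map $x \mapsto qx$ on $\mathbb{Z}_{q^2-1}$ are untouched by the extension; hence all five counts transfer verbatim to the code obtained from the ideal generated by $X^{n+1}-X$.

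The one genuine difference is the coset $I_0 = \{0\}$. For length $q^2-1$ it is symmetric and contributes $1$ to $c$, whereas for the extended code it is FR-asymmetric by \cite{QINP2} --- its reciprocal coset would be $I_{n+1}$, which is not contained in $\mathcal{H}$ --- and hence contributes $0$ to $c$. I would therefore simply repeat the weighting step of the previous proof (cardinality-one symmetric cosets weighted by $1$, cardinality-one SR-asymmetric by $2$, cardinality-two symmetric by $2$, and the remaining counted cosets, which are then cardinality-two SR-asymmetric, by $4$), but now \emph{without} the leading $+1$ coming from $I_0$, obtaining
$$c= 4 \times \bigl((\ref{for:ec0})-(\ref{for:ec1})+(\ref{for:ec2})\bigr) -3\times(\ref{for:ec3})-2\times(\ref{for:ec4})-2\times (\ref{for:ec5}),$$
which expands to the claimed formula. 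As a sanity check one sees that, since $\sum_{j=0}^{t} i_{m_j}$ is unaffected and the new $c$ is one less than the old one, the dimension $q^2-2\sum_{j=0}^{t} i_{m_j}+c$ agrees with the dimension in the length $q^2-1$ case, as it must.

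For the parameters of the EAQECC I would invoke Theorem \ref{th:eaqecc} with $n=q^2$, dimension $\dim E_\Delta = \sum_{j=0}^{t} i_{m_j}$ given by Proposition \ref{la7} (which still applies, the exponents appearing in $\Delta$ being below the new length $q^2$, so the evaluation map stays injective on the relevant span), and minimum distance $d \ge m_{t+1}+1$ of the dual code from Proposition \ref{la8}; this yields $[[q^2,\,q^2-2\sum_{j=0}^{t} i_{m_j}+c,\,\ge m_{t+1}+1;\,c]]_q$. The only point that needs a moment's care---and it is the mildest of obstacles---is confirming that adjoining the evaluation at $0$ really does not perturb the symmetric/asymmetric type or the cardinality of any nonzero coset; this is immediate because only the reciprocal of $I_0$ leaves the ground set $\mathcal{H}$, the rest of $\mathbb{Z}_{q^2-1}$ and the action $x\mapsto qx$ being unchanged.
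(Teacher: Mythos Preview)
Your proposal is correct and follows exactly the route the paper intends: the paper itself gives no separate proof beyond the sentence ``a very similar argument proves the following result,'' and the mechanism you invoke---that evaluating at $0$ turns $I_0$ from symmetric into FR-asymmetric (cf.\ the discussion at the end of Section~\ref{se:tres}), thereby dropping the leading $+1$ from the formula for $c$ while leaving (\ref{for:ec0})--(\ref{for:ec5}) untouched---is precisely that argument. One tiny slip: the reciprocal of $I_0$ in the extended setting is $I_n$ (with $n=q^2-1$), not $I_{n+1}$, as stated in Section~\ref{se:tres}; this does not affect your reasoning.
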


Note that considering an RS code over the finite field with $q^2$ elements and the Hermitian metric produces an EAQECC with better parameters than the corresponding one obtained with a BCH over the field with $q$ elements and the Euclidean metric, because both have the same length and bound for the minimum distance and the dimension is larger for the RS code with the Hermitian metric. However, we notice that it is possible to obtain a BCH code with a higher value for $c$, which increases the constellation of known EAQECCs.

\section{Hermitian BCH EAQECC  with extension degree 2}\label{se:seis}

We consider now BCH codes over $\mathbb{F}_{q^2}$ and the Hermitian inner product to construct EAQECCs. Set $\ell = 2r$ and $n= q^4 -1$ in this section. Thus, the cyclotomic coset with minimal representative $x$ is equal to $I_x = \{x , q^2 x \}$, and the reciprocal cyclotomic coset of $I_x$ is equal to $I_{n-qx}$.

We can again  characterize symmetric cosets and SR-asymmetric cosets in $\Delta=\Delta(t) = \cup_{j=0}^{t} I_{m_j}$, but such a characterization is not as precise as for the Euclidean case for cosets with cardinality 2. Let $x$ be the minimal representative of $I_x$ with cardinality 2. In the Euclidean case, we have that $n-qx$ is  the minimal representative of its reciprocal coset $I_{n-x}$. However, in the Hermitian case the reciprocal coset of $I_x$ is $I_{n-qx}= \{n-qx, n-q^3 x \}$ and we do not know a priori who is its minimal representative because any one of them can be the smallest element modulo $q^4-1$.

Let $(a_0,a_1,a_2,a_3)$ be the $q$-adic expansion of $x$, i.e., $x=a_0+a_1q+a_2q^2+a_3 q^3$, with $0 \le a_i < q$ for $0 \le i \le 3$. We denote in this section by $x$ the 4-tuple $(a_0,a_1,a_2,a_3)$ as well. Note that $y=q^2x$ has $q$-adic expansion $(a_2,a_3,a_0,a_1)$. We begin by studying the cosets with cardinality one since they can be easily characterized.

\begin{lem}\label{lem:card1}
$I_{x} = \{x\}$ holds if and only if $(q^2 +1) \mid x$. Moreover, $I_x$ has cardinality 1 if and only if the $q$-adic expansion of $x$ is of the form $(a,b,a,b)$, with $0\le a,b <q$.
\end{lem}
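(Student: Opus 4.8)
The statement has two parts, and the plan is to treat the first as essentially a reformulation and then derive the second from it by unwinding the $q$-adic arithmetic. For the first part, recall that $I_x = \{x, q^2 x\}$ where reduction is modulo $n = q^4 - 1$, so $I_x$ has cardinality one exactly when $x \equiv q^2 x \pmod{q^4-1}$, i.e. when $(q^2 - 1) x \equiv 0 \pmod{q^4 - 1}$. Since $q^4 - 1 = (q^2 - 1)(q^2 + 1)$, this congruence is equivalent to $(q^2 + 1) \mid x$. This is exactly the same style of argument already used in the Euclidean case (cosets of cardinality one correspond to multiples of $q+1$ there), so the first equivalence follows immediately.

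For the second part, the plan is to translate the divisibility condition $(q^2+1) \mid x$ into a condition on the $q$-adic digits $(a_0, a_1, a_2, a_3)$. Writing $x = a_0 + a_1 q + a_2 q^2 + a_3 q^3$ with $0 \le a_i < q$, I would reduce $x$ modulo $q^2 + 1$ using $q^2 \equiv -1 \pmod{q^2+1}$, which gives
\[
x \equiv (a_0 - a_2) + (a_1 - a_3) q \pmod{q^2 + 1}.
\]
The quantity $(a_0 - a_2) + (a_1 - a_3) q$ lies strictly between $-(q^2+1)$ and $q^2+1$ (since each $a_i \in \{0, \dots, q-1\}$, the real part is in $(-q, q)$ and the coefficient of $q$ is in $(-q, q)$, so the whole expression is in $(-(q^2 - 1) - (q-1), (q^2-1)+(q-1)) \subseteq (-(q^2+1), q^2+1)$). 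Hence this expression is $\equiv 0 \pmod{q^2+1}$ if and only if it is actually $0$, which forces $a_0 = a_2$ and $a_1 = a_3$; that is, the $q$-adic expansion of $x$ has the form $(a,b,a,b)$. Combining with the first part gives the claim.

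The only point requiring a little care — and what I would flag as the main (though modest) obstacle — is justifying that $(a_0 - a_2) + (a_1 - a_3)q = 0$ genuinely forces $a_0 = a_2$ and $a_1 = a_3$, rather than allowing a "carry"-type cancellation: one must observe that $|a_0 - a_2| \le q - 1 < q$, so the representation $(a_0 - a_2) + (a_1 - a_3)q$ of $0$ is forced to have both coefficients vanish (uniqueness of base-$q$ representation extended to the symmetric digit range $(-q, q)$). Once this is in place, the equivalence $(q^2+1)\mid x \iff x = (a,b,a,b)$ is immediate, and the lemma follows. One may also double-check consistency with the explicit description $y = q^2 x$ having $q$-adic expansion $(a_2, a_3, a_0, a_1)$ noted just before the statement: $I_x$ collapses to a single element precisely when $(a_0, a_1, a_2, a_3) = (a_2, a_3, a_0, a_1)$, which is visibly the same condition.
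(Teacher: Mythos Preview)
Your proof is correct. The first part is identical to the paper's argument. For the second part, the paper takes the shorter route that you relegate to a ``consistency check'' at the end: since $q^2 x$ has $q$-adic expansion $(a_2,a_3,a_0,a_1)$, the condition $x = q^2 x$ in $\mathbb{Z}_n$ immediately forces $(a_0,a_1,a_2,a_3)=(a_2,a_3,a_0,a_1)$ by uniqueness of the $q$-adic expansion. Your primary argument via reduction modulo $q^2+1$ is perfectly valid but slightly more work; what you flag as the ``main obstacle'' (ruling out carry-type cancellation) simply does not arise in the paper's approach, since one is comparing two genuine $q$-adic expansions of the same integer rather than a signed-digit expression.
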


\begin{proof}
$I_{x}=\{x\}$ if and only if $x=q^2 x$. That is, $x(q^2-1) =0$ modulo $n=q^4-1$. Let  $x=(a_0,a_1,a_2,a_3)$, since $I_x=\{x\}$ we have that $x= q^2x$ and $q^2(a_0,a_1,a_2,a_3)=(a_2,a_3,a_0,a_1)$. Therefore $(a_0,a_1,a_2,a_3)=(a_2,a_3,a_0,a_1)$ and the result holds.
\end{proof}

Next, we characterize symmetric and SR-asymmetric cosets with cardinality one.

\begin{lem}\label{le:CuandoLCDunico}
Let $x = (a_0,a_1,a_2,a_3)$ with $I_x=\{x\}$, then
\begin{itemize}
\item $I_x$ is a symmetric coset if and only if $a_2 + a_3 = q-1$.
\item $I_x$ is an SR-asymmetric coset if and only if $a_2 + a_3 > q-1$.
\end{itemize}
\end{lem}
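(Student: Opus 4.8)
By Lemma~\ref{lem:card1}, since $I_x=\{x\}$, the $q$-adic expansion of $x$ has the form $(a,b,a,b)$ with $a=a_0=a_2$ and $b=a_1=a_3$; in particular $x=(q^2+1)(a+bq)$. The plan is to translate the defining conditions from Definition~\ref{def:syasi} — ``$I_x$ symmetric iff $I_{n-qx}=I_x$'' and ``$I_x$ SR-asymmetric iff $I_{n-qx}$ has smaller minimal representative than $I_x$'' — into arithmetic on the tuple $(a,b)$. Since both $x$ and its reciprocal have cardinality one (the reciprocal of a cardinality-one coset is again cardinality one, because $\gcd$ with $q^2+1$ is preserved under $x\mapsto -qx$), the comparison ``$I_x$ versus $I_{n-qx}$'' reduces to comparing the single elements $x$ and $n-qx\bmod n$.

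**The symmetric case.** First I would compute $n-qx \bmod n$ where $n=q^4-1$. With $x=(a,b,a,b)$ one has $qx$ with $q$-adic expansion obtained by cyclically shifting: $qx=(0,a,b,a)+bq^4\equiv b+(0,a,b,a)=(b,a,b,a)$ modulo $q^4-1$ — i.e. $qx=(b,a,b,a)$ as a 4-tuple. Then $n-qx=(q-1,q-1,q-1,q-1)-(b,a,b,a)=(q-1-b,\,q-1-a,\,q-1-b,\,q-1-a)$, which is again of the form $(a',b',a',b')$ with $a'=q-1-b$, $b'=q-1-a$, confirming cardinality one. Now $I_x=I_{n-qx}$ iff $x=n-qx$ (both being their own unique elements), i.e. iff $(a,b)=(q-1-b,q-1-a)$, i.e. iff $a+b=q-1$. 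But $a_2+a_3=a+b$, so this is exactly $a_2+a_3=q-1$, proving the first bullet.

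**The SR-asymmetric case.** For the second bullet, $I_x$ is SR-asymmetric iff it is asymmetric — i.e. $a+b\neq q-1$ by the previous paragraph — and $x>n-qx$ (as integers in $\{0,1,\dots,n-1\}$, $x$ being the minimal representative of $I_x$ and $n-qx$ that of its reciprocal). Comparing the 4-tuples $(a,b,a,b)$ and $(q-1-b,q-1-a,q-1-b,q-1-a)$ lexicographically from the top coordinate: the leading coordinates are $a$ versus $q-1-b$, so $x>n-qx$ iff $a>q-1-b$ (with the case $a=q-1-b$ forcing $b=q-1-a$ i.e. $a+b=q-1$, the symmetric case). Hence the asymmetric-and-$x>n-qx$ condition collapses to $a+b>q-1$, i.e. $a_2+a_3>q-1$. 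Combining the two bullets covers all asymmetric cosets, since $a+b<q-1$ then necessarily gives the FR-asymmetric case.

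**Expected obstacle.** The only delicate point is the lexicographic comparison of 4-tuples under the identification $\mathbb{Z}_n=\mathcal{H}$: one must be careful that the $q$-adic tuple ordering genuinely agrees with the integer ordering (it does, since all coordinates lie in $\{0,\dots,q-1\}$ and the tuple is read most-significant-digit first), and that $n-qx$ is computed correctly modulo $q^4-1$ rather than over $\mathbb{Z}$ — the reduction ``$q^4\equiv 1$'' is what makes the cyclic shift description valid. Once the shift formula $qx=(b,a,b,a)$ and the complement formula for $n-qx$ are in hand, both equivalences are immediate; I expect the write-up to be short, essentially just these two observations plus Definition~\ref{def:syasi}.
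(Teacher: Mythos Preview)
Your proposal is correct and follows essentially the same approach as the paper: compute $qx=(b,a,b,a)$ by cyclic shift, form $n-qx$ coordinatewise as $(q-1-b,q-1-a,q-1-b,q-1-a)$, and compare with $x$ to obtain the condition $a_2+a_3=q-1$ (symmetric) or $a_2+a_3>q-1$ (SR-asymmetric). One minor slip: in the paper's convention the tuple $(a_0,a_1,a_2,a_3)$ is least-significant-first, so the leading (i.e., $q^3$) coordinates you should compare are $b$ versus $q-1-a$, not $a$ versus $q-1-b$---but since $b>q-1-a \iff a>q-1-b \iff a+b>q-1$, your conclusion is unaffected.
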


\begin{proof}
Since $I_x= \{x\}$, one has that $I_x$ is symmetric if $n-qx=x$, that is, if $$(q-1,q-1,q-1,q-1)-(a_3,a_2,a_3,a_2) = (a_2,a_3,a_2,a_3),$$which is equivalent to $a_2 + a_3 = q-1$. Analogously, $I_x$ is an SR-asymmetric coset if $n-qx<x$, that is, if $$(q-1,q-1,q-1,q-1)-(a_3,a_2,a_3,a_2) < (a_2,a_3,a_2,a_3),$$which is equivalent to $a_2 + a_3 > q-1$.
\end{proof}

Thus, we have completely characterized the cosets with cardinality one. We consider now cosets with cardinality two. As we have mentioned before, its study will be more elaborated and complicated, although we have some positive news described in the following result.

\begin{lem}\label{lem:bombazo}
There is no symmetric coset with cardinality 2 in the Hermitian case.
\end{lem}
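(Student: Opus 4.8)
The plan is to show that the symmetry condition on a coset in the Hermitian setting forces that coset to have cardinality $1$, which immediately proves the lemma. Concretely, I will show that any representative $x$ of a symmetric coset satisfies $x(q+1)\equiv 0\pmod{q^4-1}$, and that every such $x$ has a $q$-adic expansion of the palindromic shape $(c_0,c_1,c_0,c_1)$, so that $I_x$ has cardinality $1$ by Lemma \ref{lem:card1}.

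First I would unwind the definition. With $n=q^4-1$, a coset $I_x$ is symmetric exactly when $I_{n-qx}=I_x$; since every coset has at most two elements and $I_x=\{x,q^2x\}$, this means $n-qx\equiv x\pmod n$ or $n-qx\equiv q^2x\pmod n$. Using $n\equiv 0\pmod n$, the first alternative reads $x(q+1)\equiv 0\pmod{q^4-1}$ and the second reads $qx(q+1)\equiv 0\pmod{q^4-1}$; multiplying the second congruence by $q^3$ and using $q^4\equiv 1\pmod{q^4-1}$, both alternatives reduce to the single congruence
\[
x(q+1)\equiv 0\pmod{q^4-1}.
\]
Since $q+1$ divides $q^4-1=(q-1)(q+1)(q^2+1)$, this congruence is equivalent to $(q-1)(q^2+1)\mid x$ in $\mathbb{Z}$; hence, taking $0\le x<q^4-1$, one has $x=k(q-1)(q^2+1)$ for some integer $k$ with $0\le k\le q$. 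Because $k(q-1)\le q(q-1)<q^2$, one may write $k(q-1)=c_0+c_1q$ with $0\le c_0,c_1<q$, and then
\[
x=k(q-1)(q^2+1)=c_0+c_1q+c_0q^2+c_1q^3,
\]
so the $q$-adic expansion of $x$ is $(c_0,c_1,c_0,c_1)$. By Lemma \ref{lem:card1}, $I_x$ has cardinality $1$, contradicting the assumption that it has cardinality $2$, and the lemma follows.

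I do not expect a genuine obstacle here; the only points requiring care are the uniform reduction of the two alternatives $n-qx\equiv x$ and $n-qx\equiv q^2x$ to the same divisibility condition, and the elementary base-$q$ bookkeeping that identifies the multiples of $(q-1)(q^2+1)$ lying below $q^4-1$ with the $4$-tuples of type $(c_0,c_1,c_0,c_1)$.
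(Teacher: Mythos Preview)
Your proof is correct. Both your argument and the paper's hinge on the case split $n-qx\equiv x$ versus $n-qx\equiv q^2x\pmod{n}$, but you handle it differently: the paper treats the two cases separately, deriving a direct contradiction (via $q\mid q^4-1$) in the second case and obtaining the single element $x=(q-1)(q^2+1)=(q-1,0,q-1,0)$ in the first, whereas you observe that both alternatives collapse to the single congruence $x(q+1)\equiv 0\pmod{q^4-1}$ and then show uniformly that every solution $x=k(q-1)(q^2+1)$ has $q$-adic expansion $(c_0,c_1,c_0,c_1)$ and hence cardinality~$1$ by Lemma~\ref{lem:card1}. Your route is a bit more elementary and sidesteps any ambiguity about integer versus modular equalities, at the cost of the small extra step of parametrising all multiples of $(q-1)(q^2+1)$ below $q^4-1$; the paper's version is shorter but leans on reading the congruences as integer equations in the favourable range.
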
 

\begin{proof}
Let $I_x$ be with cardinality 2 and symmetric, then $n-qx = q^2 x$ or $n -qx = x$. Let us show, by contradiction, that these equalities do not hold. If $n-qx = q^2 x$, then $q^4-1 = q(q-1)x$ which implies that $q$ divides $q^4-1$, contradiction. 

If $n-qx = x$, then $n = (q+1)x$. This implies that $x= (q-1)(q^2 +1) = -1 +q -q^2 + q^3 = (q-1,0,q-1,0)$ and, by Lemma \ref{lem:card1}, $I_x$ has cardinality 1, contradiction.
\end{proof}

Hence, among the cosets with cardinality two, we should only count SR-asymmetric cosets. 

\begin{lem}
Let $I_x$ be with cardinality 2 and such that $x$ is its minimal representative. Then $I_x$ is an SR-asymmetric coset if and only if there exists $z \in I_x$ such that $n-qz < x$.
\end{lem}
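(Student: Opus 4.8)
The statement is essentially an unwinding of the definitions of the reciprocal coset and of SR-asymmetric cosets (Definition \ref{def:syasi}) combined with Lemma \ref{lem:bombazo}. The plan is as follows. Since $I_x$ has cardinality $2$, we know $I_x = \{x, q^2x\}$ with $x$ its minimal representative, and its reciprocal coset is $I_{n-qx} = \{n-qx, \, n-q^3x\}$. By Lemma \ref{lem:bombazo} there is no symmetric coset of cardinality $2$, so $I_x$ is either FR-asymmetric or SR-asymmetric; in particular $I_x \neq I_{n-qx}$, so the two cosets are disjoint and together contain four distinct elements. By Definition \ref{def:syasi}, $I_x$ is SR-asymmetric precisely when the minimal representative of $I_{n-qx}$ is strictly smaller than the minimal representative of $I_x$, i.e. strictly smaller than $x$.

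The heart of the argument is then the elementary observation that the minimal representative of $I_{n-qx}$ is $\min\{n-qx,\, n-q^3x\}$, while every element $z$ of $I_x$ satisfies $n-qz \in I_{n-qx}$: indeed, if $z = x$ then $n-qz = n-qx \in I_{n-qx}$, and if $z = q^2x$ then $n - qz = n - q^3x \in I_{n-qx}$ as well. Conversely every element of $I_{n-qx}$ arises as $n-qz$ for some $z \in I_x$ (namely $n-qx$ from $z=x$ and $n-q^3x$ from $z=q^2x$). Hence the set $\{\, n-qz : z \in I_x \,\}$ is exactly $I_{n-qx}$, and its minimum equals the minimal representative of $I_{n-qx}$. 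Therefore, the minimal representative of $I_{n-qx}$ is strictly less than $x$ if and only if there exists $z \in I_x$ with $n - qz < x$, which is exactly the claimed characterization of $I_x$ being SR-asymmetric.

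I would write this out in two short steps: first record that $I_{n-qx} = \{n-qz : z \in I_x\}$ (a one-line computation using $n-q(q^2x) = n-q^3x$ modulo $n = q^4-1$), and then combine with Definition \ref{def:syasi} and Lemma \ref{lem:bombazo} to conclude. There is no real obstacle here; the only point requiring a little care is making explicit that, unlike the Euclidean case treated in Lemma \ref{lem:resta}, one cannot identify a priori which of $n-qx$ or $n-q^3x$ is the minimal representative of the reciprocal coset — which is precisely why the characterization is phrased as an existential statement over $z \in I_x$ rather than as a single inequality. This mild ambiguity is exactly what the subsequent (more technical) analysis in the section must handle, so it is worth flagging at the end of the proof.
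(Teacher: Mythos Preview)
Your proposal is correct and follows essentially the same approach as the paper's own proof: both observe that the reciprocal coset $I_{n-qx}$ equals $\{n-qz : z \in I_x\} = \{n-qx,\, n-q^3x\}$, so its minimal representative is $\min\{n-qx,\, n-qy\}$, and then apply the definition of SR-asymmetric exactly as in Lemma~\ref{lem:resta}. Your version is simply more explicit---in particular, your appeal to Lemma~\ref{lem:bombazo} to dispose of the symmetric case and your remark on why the criterion must be existential are helpful clarifications, but they do not change the underlying argument.
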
 

\begin{proof}
Let $I_x = \{x, y\}$. Note that the minimal representative of the reciprocal cyclotomic coset of $I_x$ is $n-qx$ or $n-qy$ and the result follows as in Lemma \ref{lem:resta}.
\end{proof}

  The following result is used to characterize  which element is the minimal representative of the reciprocal coset of a coset with two elements.

\begin{lem}\label{le:QuienMayor}
Let $I_x=\{x,y\}$ be with cardinality 2 and such that $x=(a_0,a_1,a_2,a_3)$ is its minimal representative. One has that:
\begin{itemize}
\item $qx>qy$ if and only if $a_2>a_0$ or $a_2=a_0$ and $a_1>a_3$.
\item $qx<qy$ if and only if $a_2<a_0$ or $a_2=a_0$ and $a_1<a_3$.
\end{itemize}
\end{lem}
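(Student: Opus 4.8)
The plan is to translate everything into statements about $q$-adic digit tuples, where multiplication by $q$ modulo $n = q^4-1$ becomes a cyclic shift of digits, and then to compare the two resulting integers digit by digit starting from the most significant position.

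First I would record the elementary fact that if $w$ has $q$-adic expansion $(c_0,c_1,c_2,c_3)$ (all digits in $\{0,\dots,q-1\}$) and $w \not\equiv 0$, then the residue of $qw$ modulo $n$ has $q$-adic expansion $(c_3,c_0,c_1,c_2)$: indeed $qw = c_0q + c_1q^2 + c_2q^3 + c_3q^4 \equiv c_3 + c_0q + c_1q^2 + c_2q^3 \pmod{q^4-1}$, the digits of the right-hand side lie in $\{0,\dots,q-1\}$, and its value lies in $\{0,\dots,q^4-1\}$, attaining $q^4-1$ only when $w\equiv 0$, which is excluded since $x\neq 0$ (it is the minimal representative of a coset of cardinality $2$). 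Applying this to $x=(a_0,a_1,a_2,a_3)$ gives $qx = (a_3,a_0,a_1,a_2)$, that is $qx = a_3 + a_0q + a_1q^2 + a_2q^3$; and applying it to $y = q^2x$, whose $q$-adic expansion is $(a_2,a_3,a_0,a_1)$, gives $qy = (a_1,a_2,a_3,a_0)$, that is $qy = a_1 + a_2q + a_3q^2 + a_0q^3$.

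Next I would compare $qx$ and $qy$ as integers in $\{0,\dots,n-1\}$ via
\[
qx - qy = (a_2-a_0)q^3 + (a_1-a_3)q^2 + (a_0-a_2)q + (a_3-a_1).
\]
If $a_2 > a_0$, the leading term is at least $q^3$ and dominates the remaining three terms, whose absolute value is at most $q^3-1$, so $qx > qy$; symmetrically, $a_2 < a_0$ forces $qx < qy$. If $a_2 = a_0$, the expression collapses to $(a_1-a_3)(q^2-1)$, so $qx > qy$ iff $a_1 > a_3$ and $qx < qy$ iff $a_1 < a_3$. This proves both stated equivalences in the direction "digit condition $\Rightarrow$ inequality".

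For the converses I would observe that the only remaining case, $a_2 = a_0$ together with $a_1 = a_3$, forces the $q$-adic expansion of $x$ to have the form $(a_0,a_1,a_0,a_1)$, whence $\#I_x = 1$ by Lemma \ref{lem:card1}, contradicting $\#I_x = 2$. Hence under the cardinality-$2$ hypothesis exactly one of "$a_2 > a_0$ or ($a_2 = a_0$ and $a_1 > a_3$)" and "$a_2 < a_0$ or ($a_2 = a_0$ and $a_1 < a_3$)" holds, and since these two alternatives are mutually exclusive, each is equivalent to the corresponding strict inequality between $qx$ and $qy$. The only mildly delicate point is the bookkeeping in the first step—checking that the cyclically shifted tuple really is the $q$-adic expansion of the residue with no overflow—and remembering to invoke $\#I_x = 2$ to discard the degenerate equality case; everything else is a routine comparison of integers.
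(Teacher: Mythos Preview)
Your proof is correct and follows essentially the same approach as the paper: both compute the $q$-adic expansions $qx=(a_3,a_0,a_1,a_2)$ and $qy=(a_1,a_2,a_3,a_0)$ via the cyclic-shift property, compare them from the most significant digit downward, and use the cardinality-$2$ hypothesis to exclude the degenerate case $a_2=a_0$, $a_1=a_3$. Your version is simply more explicit, writing out the difference $qx-qy$ and bounding the lower-order terms rather than appealing directly to lexicographic comparison, and it also spells out why the shifted tuple really is the $q$-adic expansion; these are refinements of presentation, not a different argument.
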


\begin{proof}
We have that $qx=(a_3,a_0,a_1,a_2)$ and $qy=(a_1,a_2,a_3,a_0)$. We prove the case $	qx>qy$, because the case $qx<qy$ is analogous. The inequality $qx>qy$ holds if and only if $(a_3,a_0,a_1,a_2)$ is greater than $(a_1,a_2,a_3,a_0)$ with respect to the lexicographical ordering. That is, if $a_2 >a_0$ or if $a_2 = a_0$ and $a_1 > a_3$. Note that since $qx \neq qy$, we cannot have that  $a_2 = a_0$ and $a_1 = a_3$ and the result holds.
\end{proof}

Next we characterize the cyclotomic cosets with two elements that are SR-asymmetric cosets.

\begin{lem}\label{le:CuandoLCD}
Let $I_x=\{x,y\}$ with cardinality 2, where $x=(a_0,a_1,a_2,a_3)$ is its minimal representative. Then, $I_x$ is an SR-asymmetric coset if and only if 

\begin{itemize}
\item either $a_2+a_3 > q-1$, or $a_2+a_3=q-1$ and $a_1+a_2>q-1$, if $qx>qy$,
\item either $a_0+a_3 > q-1$, or $a_0+a_3=q-1$ and $a_3+a_2>q-1$, if $qx<qy$.
\end{itemize}
\end{lem}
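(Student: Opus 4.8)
The strategy is to reduce the SR-asymmetric condition $n - qz < x$ for some $z\in I_x$ to an inequality purely in the $q$-adic digits $(a_0,a_1,a_2,a_3)$, by first identifying which of the two candidates $n-qx$ or $n-qy$ is actually the minimal representative of the reciprocal coset $I_{n-qx}$. By the lemma preceding this one, $I_x$ is SR-asymmetric iff $\min\{n-qx,\,n-qy\} < x$, i.e. iff $\max\{qx,qy\}$ (computed modulo $n$, as $q$-tuples) satisfies $n - \max\{qx,qy\} < x$. So I would split into the two cases $qx>qy$ and $qx<qy$ governed by Lemma~\ref{le:QuienMayor} (the case $qx=qy$ is impossible since $|I_x|=2$), and in each case write out $n - (\text{the larger of }qx,qy)$ explicitly as a $q$-tuple and compare it lexicographically to $x=(a_0,a_1,a_2,a_3)$.

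In the first case $qx>qy$, we have $qx=(a_3,a_0,a_1,a_2)$, so the minimal representative of the reciprocal coset is $n-qx = (q-1,q-1,q-1,q-1)-(a_3,a_0,a_1,a_2) = (q-1-a_3,\,q-1-a_0,\,q-1-a_1,\,q-1-a_2)$, using that no borrow occurs since each digit of $qx$ is $\le q-1$. The inequality $n-qx < x$ then reads, comparing most-significant digits first: either $q-1-a_2 < a_3$, i.e. $a_2+a_3>q-1$; or $q-1-a_2=a_3$ (i.e. $a_2+a_3=q-1$) together with the next-digit comparison $q-1-a_1 < a_2$, i.e. $a_1+a_2>q-1$. (One must note that equality $a_2+a_3=q-1$ and $a_1+a_2=q-1$ simultaneously cannot produce strict inequality and in fact, as I would check, leads to the symmetric case which is excluded by Lemma~\ref{lem:bombazo}; so carrying the lexicographic comparison two digits deep suffices.) This yields exactly the stated characterization. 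The second case $qx<qy$ is entirely analogous: now $qy=(a_1,a_2,a_3,a_0)$ is the larger, its complement is $n-qy=(q-1-a_1,\,q-1-a_2,\,q-1-a_3,\,q-1-a_0)$, and $n-qy<x$ unpacks to $a_0+a_3>q-1$, or $a_0+a_3=q-1$ with $a_2+a_3>q-1$.

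The main obstacle is bookkeeping: one must be careful that (a) subtracting a $q$-tuple from $(q-1,q-1,q-1,q-1)$ is digit-wise with no borrows precisely because every coordinate of $qx$ and $qy$ lies in $\{0,\dots,q-1\}$, so the result is a legitimate $q$-adic expansion; (b) the comparison is genuinely lexicographic on $q$-tuples, which is the ordering on $\{0,\dots,n\}$; and (c) the lexicographic test only needs to be pursued until a strict inequality appears, which here happens within the top two digits — this is where Lemma~\ref{lem:bombazo} is invoked to rule out the degenerate all-equalities branch and to justify stopping. Once these points are pinned down, the proof is a short case analysis following Lemmas~\ref{le:QuienMayor} and the preceding reciprocal-minimal-representative lemma.
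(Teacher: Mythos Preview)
Your overall strategy matches the paper's exactly: split on $qx>qy$ versus $qx<qy$ via Lemma~\ref{le:QuienMayor}, write $n-qx$ (resp.\ $n-qy$) digit-wise, and compare lexicographically to $x$. The paper carries out precisely this computation.

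However, your justification for why the lexicographic comparison terminates after two digits is not right. You claim that $a_2+a_3=q-1$ together with $a_1+a_2=q-1$ ``leads to the symmetric case which is excluded by Lemma~\ref{lem:bombazo}''. This is false: those two equalities force only $a_1=a_3$, not $n-qx=x$ (for that you would also need $a_0+a_1=q-1$). Lemma~\ref{lem:bombazo} rules out equality $n-qx=x$, but it cannot by itself decide the direction of the strict inequality at the third digit. The paper's argument here is different and uses Lemma~\ref{le:QuienMayor}: from $a_1=a_3$ and the case hypothesis $qx>qy$ one gets $a_2>a_0$, whence $a_0+a_1<a_2+a_1=q-1$, so the third digit of $n-qx$ exceeds that of $x$ and the inequality $n-qx<x$ fails. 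In other words, it is the branching hypothesis $qx>qy$ (via Lemma~\ref{le:QuienMayor}), not Lemma~\ref{lem:bombazo}, that closes this case. Once you replace that step, your argument coincides with the paper's.
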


\begin{proof}

Let us assume  that $qx >qy$, the case $qx < qy$ follows analogously. In this case, the minimal representative of $I_{n-qx}$ is $n-qx$. We should compute when $n-qx \le x$, that is, when $(q-1,q-1,q-1,q-1)-(a_3,a_0,a_1,a_2) \le (a_0,a_1,a_2,a_3)$, since $n=(q-1,q-1,q-1,q-1)$. That is, \begin{equation}\label{eq:lem4}
(q-1,q-1,q-1,q-1) \le (a_3 + a_0)+ (a_0 + a_1)q + (a_1 + a_2)q^2 + (a_2+ a_3)q^3.
\end{equation} We claim that (\ref{eq:lem4}) holds if $a_2 + a_3 > q-1$, or if $a_2 + a_3 = q-1$ and $a_1 + a_2 > q-1$. Equivalently, we are claiming that if $a_2 + a_3 = q-1$ and $a_1 + a_2 = q-1$, then Inequality (\ref{eq:lem4}) does not hold. Indeed, in that case we would have that $a_1 = a_3$ and, by Lemma \ref{le:QuienMayor},  $a_2 > a_0$ (since $qx > qy$). As a consequence, (\ref{eq:lem4}) is not true since $a_0 + a_1 < a_2 + a_1 = q-1$.\end{proof}


The following definition is a key concept to study cosets with cardinality 2.

\begin{de}
We define an {\it interlude} as the set of natural numbers that are between two consecutive cosets of cardinality one and that are minimal representatives of a coset with cardinality 2. By Lemma \ref{lem:card1}, the cyclotomic cosets of cardinality 1 are of the form $\{(a,b,a,b)\}$, with $0\le a,b <q$. Hence, an interlude is formed by integers $x$ of the form $$(a,b,a,b)<x<(a+1,b,a+1,b), \mbox{~with~} 0 \le a <q-1\mbox{~and~} 0\le b <q,$$ or  $$(q-1,b,q-1,b)<x<(0,b+1,0,b+1) \mbox{~with~} 0\le b \le q-2, \mbox{~and~} x<q^2 x,$$ to ensure that $x$ is the minimal representative. We denote the interlude bounded by $(a,b,a,b)$ and $(a+1,b,a+1,b)$ by $[(a,b,a,b),(a+1,b,a+1,b)]_M$.
\end{de}

We now characterize the elements in an interlude. We will see that they are all consecutive which explains the name.

\begin{lem}\label{le: CuantasXentre2}
Let $0 \le a <q-1$ and $0\le b <q$, then $[(a,b,a,b),(a+1,b,a+1,b)]_M =$ $$ \{ (i,j,a,b) : 0\le i < q \mbox{~and~} b < j <q, \mbox{~or~}  a < i <q \mbox{~and~} j=b\}.$$

Let $0\le b \le q-2$, then $$[(q-1,b,q-1,b), (0,b+1,0,b+1)]_M = \{ (i,j,q-1,b) : 0\le i < q \mbox{~and~} b < j <q \}.$$
\end{lem}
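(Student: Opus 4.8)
The plan is to translate the defining inequalities of an interlude into constraints on the $q$-adic digits and then read off exactly which $4$-tuples survive. Recall that $[(a,b,a,b),(a+1,b,a+1,b)]_M$ consists of those $x$ with $(a,b,a,b)<x<(a+1,b,a+1,b)$ that are minimal representatives of a cardinality-$2$ coset, i.e. $x<q^2x$. First I would fix the $q$-adic expansion $x=(a_0,a_1,a_2,a_3)$ and compare with the two endpoints. Since the endpoints agree in their two high digits (both have $a_3=b$, $a_2=a$, resp.\ $a_2=a+1$), the strict inequality $x>(a,b,a,b)$ forces $a_3\ge b$, and $x<(a+1,b,a+1,b)$ forces $a_3\le b$; hence $a_3=b$. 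With $a_3=b$ fixed, the same comparison on the next digit $a_2$ gives $a\le a_2\le a+1$, so $a_2\in\{a,a+1\}$. I would split into these two cases.

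In the case $a_2=a+1$: then automatically $x>(a,b,a,b)$ (the digit at position $2$ is already larger) and $x<(a+1,b,a+1,b)$ becomes $a_1+a_0q<b+0\cdot q$ after subtracting the common high part — wait, more carefully, with $a_3=b$ and $a_2=a+1$ the top two digits match the upper endpoint exactly, so $x<(a+1,b,a+1,b)$ reduces to $(a_0,a_1)<(b,0)$ lexicographically on the low digits read high-to-low, i.e.\ $a_1<b$, or $a_1=b$ and $a_0<0$ which is impossible; hence $a_1<b$, i.e.\ $0\le a_1<b$, while $a_0$ is free, $0\le a_0<q$. That gives the tuples $(i,j,a+1,b)$ with $0\le i<q$ and $0\le j<b$; relabelling $(a+1)\mapsto$ the ``$a$'' of the statement shifts nothing, but observe this matches the second family in the Lemma's description once we note that the roles of $a_2$ being $a$ or $a+1$ correspond to the two disjuncts ``$a<i<q, j=b$'' versus ``$0\le i<q, b<j<q$'' after we also handle the $a_2=a$ case. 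In the case $a_2=a$: now $x<(a+1,b,a+1,b)$ holds automatically, and $x>(a,b,a,b)$ with matching top two digits reduces to $(a_1,a_0)>(b,0)$ read high-to-low, i.e.\ $a_1>b$, or $a_1=b$ and $a_0>0$. So either $b<a_1<q$ with $a_0$ free, or $a_1=b$ and $1\le a_0<q$. Combining both digit-cases and writing $(i,j)=(a_0,a_1)$, the union of admissible tuples is exactly $\{(i,j,a,b):0\le i<q,\ b<j<q\}\cup\{(i,j,a,b):a_0>0 \text{ suitably}\}$; I would then check this bookkeeping collapses to the stated set $\{(i,j,a,b):0\le i<q,\ b<j<q,\text{ or }a<i<q,\ j=b\}$. (I suspect the cleanest route is to note $a_2\in\{a,a+1\}$ just re-indexes which side of the interlude we are on, and the statement's $a$ is the lower endpoint's, so the two subcases assemble into the single description.)

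The only remaining point is the minimality condition $x<q^2x$. Since $q^2x=(a_2,a_3,a_0,a_1)$, the condition $x<q^2x$ means $(a_0,a_1,a_2,a_3)<(a_2,a_3,a_0,a_1)$ read high-to-low, i.e.\ $a_3<a_1$, or $a_3=a_1$ and $a_2<a_0$, or $a_3=a_1,a_2=a_0$ and $a_1<a_3$ (impossible) — so effectively $(a_3,a_2)<(a_1,a_0)$ lexicographically. In our situation $a_3=b$; in the ``$b<a_1<q$'' subfamily this is automatic since $a_3=b<a_1$; in the ``$a_1=b$, $a_0>0$'' subfamily we need $a_2<a_0$, which is precisely why the second endpoint case $(q-1,b,q-1,b)$ behaves differently, and it also explains the extra restriction. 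I would verify that once $a_2\in\{a,a+1\}$ is plugged in, minimality is equivalent to exactly the inequalities already imposed, so no tuples are lost or gained. The main obstacle I anticipate is purely combinatorial rather than conceptual: keeping the lexicographic comparisons on $4$-tuples straight across the two digit-cases and the two endpoint-types, and checking that the minimality constraint $x<q^2x$ is implied by (rather than strengthening) the interval constraints; the second displayed equality, for the boundary interlude $[(q-1,b,q-1,b),(0,b+1,0,b+1)]_M$, is handled by the same argument with $a=q-1$, where the subcase ``$a<i<q,j=b$'' becomes empty and only ``$0\le i<q,b<j<q$'' survives, which is exactly what the statement asserts.
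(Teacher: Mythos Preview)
Your overall plan---split on $a_2\in\{a,a+1\}$ and then impose the minimality condition $x<q^2x$---is exactly the route the paper takes. The gap is in your handling of the $a_2=a+1$ branch. You assert that the two values of $a_2$ ``correspond to the two disjuncts'' of the statement and later that, after plugging in $a_2\in\{a,a+1\}$, minimality ``is equivalent to exactly the inequalities already imposed, so no tuples are lost or gained.'' Both assertions are wrong: every element listed in the lemma has third digit equal to $a$, and the entire $a_2=a+1$ family is killed by the minimality test. Indeed, with $a_3=b$ and $a_2=a+1$, the interval inequality $x<(a+1,b,a+1,b)$ forces $a_1<b$, or $a_1=b$ and $a_0\le a$; whereas $x<q^2x$ means $(a_3,a_2)<(a_1,a_0)$ lexicographically, i.e.\ $a_1>b$, or $a_1=b$ and $a_0>a+1$. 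These two conditions are mutually exclusive, so no tuple with $a_2=a+1$ survives. This is precisely the content of the paper's proof: the elements of ``form (ii)'', namely $x=(i,j,a+1,b)$, are shown never to satisfy $x<q^2x$, and the same phenomenon (with $x=(i,j,0,b+1)$) handles the second interlude.

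There are also arithmetic slips that, if uncorrected, would block the verification you defer. When comparing the low two digits with the endpoints you write the target as $(b,0)$; it should be $(b,a+1)$ for the upper endpoint and $(b,a)$ for the lower (read high-to-low). Thus the correct boundary subcases are ``$a_1=b$ and $a_0\le a$'' (not $a_0<0$) in the $a_2=a+1$ branch, and ``$a_1=b$ and $a_0>a$'' (not $a_0>0$) in the $a_2=a$ branch. With these fixes, the $a_2=a$ branch gives exactly the set in the statement and the minimality condition is indeed equivalent to the interval condition there; but you must actually carry out the check for $a_2=a+1$ rather than presume its outcome, since that is where the two conditions diverge.
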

\begin{proof}

Let us consider $(a,b,a,b)<x<(a+1,b,a+1,b)$, then $x$ can be written in two different ways:
\begin{enumerate}[(i)]
\item $x=(i,j,a,b)$, with $0 \le i <q$ and $b <j <q$, or $a < i <q$ and $j=b$.
\item $x=(i,j,a+1,b)$, with $0 \le i <q$ and $0 \le j < b$, or $0 \le i < a+1$ and $j=b$.
\end{enumerate}
We claim that all elements of the form described in (i) are minimal representatives, and thus they are in an interlude, and that all the elements in (ii) are not minimal representatives, and hence they are not in an interlude. Then $x=(i,j,a,b)$ is the minimal representative of its cyclotomic coset if $x<q^2x$ and then  $q^2(i,j,a,b)=(a,b,i,j)>(i,j,a,b)$ if and only if $b<j$ or $j=b$ and $a<i$. That is, we get all the elements in (i). Consider now $x=(i,j,a+1,b)$, again $x$ is the minimal representative of its coset if $q^2 (i,j,a+1,b) = (a+1,b,i,j) > (i,j,a+1,b)$, which implies that $b < j$, or that $a+1 < i$ and $b=j$. Note that these conditions are not satisfied by any element in (ii).

We consider now  $(q-1,b,q-1,b)<x<(0,b+1,0,b+1)$, then $x$ can be written in two different ways:
\begin{enumerate}[(i)]
\setcounter{enumi}{2}
\item $x=(i,j,q-1,b)$, with $0 \le i <q$ and $b <j <q$.
\item $x=(i,j,0,b+1)$, with $0 \le i <q$ and $0 \le j < b +1$.
\end{enumerate}Again, we claim that the elements in (iii) are in an interlude and all the elements in (iv) are not. One has that $x=(i,j,q-1,b)$ is the minimal representative of its cyclotomic coset if $x<q^2x$ and then  $q^2(i,j,q-1,b)=(q-1,b,i,j)>(i,j,a,b)$ if and only if $b<j$. That is, we obtain all the elements in (iii). Consider now $x=(i,j,0,b+1)$, again $x$ is the minimal representative of its coset if $q^2 (i,j,0,b+1) = (0,b+1,i,j) > (i,j,0,b+1)$, which implies that $b +1 < j$. This condition is not satisfied by any element in (iv), which concludes the proof.
\end{proof}

\begin{rem}\label{rem:xintervalo}
Let $I_x$ be with cardinality 2 and $x=(a_0,a_1,a_2,a_3)$ its minimal representative, then $x$ is in the interlude $[(a_2,a_3,a_2,a_3),(a_2+1,a_3,a_2+1,a_3)]_M$ if $a_2 < q-1$, and in $[(q-1,a_3,q-1,a_3),(0,a_3+1,0,a_3+1)]_M$ if $a_2 = q-1$.
\end{rem}

FR-asymmetric cosets were described in \cite[Theorem 3.11]{QINP2}, where they were used to construct regular quantum codes:

\begin{lem}\label{le:quantumpart}
Let $0< x< (q^4-1)/(q+1)$, then $I_x$ is an FR-asymmetric coset and $x$ is its minimal representative.  
\end{lem}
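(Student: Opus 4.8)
The statement asserts two things for $0 < x < (q^4-1)/(q+1)$: that $x$ is the minimal representative of its cyclotomic coset $I_x$, and that $I_x$ is FR-asymmetric. The plan is to translate both claims into inequalities between $q$-adic expansions and dispatch them in that order. Write $x=(a_0,a_1,a_2,a_3)$, so that $q^2x=(a_2,a_3,a_0,a_1)$ and the reciprocal coset of $I_x$ is $I_{n-qx}$ with $qx=(a_3,a_0,a_1,a_2)$ and $n-qx=(q-1-a_3,q-1-a_0,q-1-a_1,q-1-a_2)$ (no borrows occur since each $a_i\le q-1$). The bound $x<(q^4-1)/(q+1)=(q-1)(q^2+1)=q^3-q^2+q-1=(q-1,0,q-1,0)$ in $q$-adic form is the key hypothesis; I would first record this identity, which also appears in the proof of Lemma \ref{lem:bombazo}.

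For the first claim, I would argue that $x<(q-1,0,q-1,0)$ forces $x\le q^2x$ in the ordering of $\mathcal H$, i.e. $x$ is minimal in $I_x$. Comparing $(a_0,a_1,a_2,a_3)$ with $(a_2,a_3,a_0,a_1)$ lexicographically, $x>q^2x$ would require $a_0>a_2$, or $a_0=a_2$ with $a_1>a_3$. In the first case $a_0\ge a_2+1$, and since $x<(q-1)q^3+\cdots$ forces $a_3\le q-1$ with the high digits constrained, a short case check using $x<(q-1,0,q-1,0)$ rules this out; the cleanest route is to observe that if $x$ were \emph{not} minimal then $q^2x<x<(q-1,0,q-1,0)$, but applying $q^2$ again returns $x$, and one derives a contradiction from the digit inequalities $a_0\le a_2$ and $a_2\le a_0$ combined with the strict bound. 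Alternatively, one notes directly that the hypothesis $x<(q^4-1)/(q+1)$ already implies $n-qx>x$ (see below), and since $n-qx$ lies in the reciprocal coset while $n-q(q^2x)=n-q^3x$ is its other element, minimality of $x$ in $I_x$ follows once the asymmetry is established together with a symmetry argument swapping roles.

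For the second claim, by Lemma \ref{lem:resta}-type reasoning (adapted to the Hermitian setting via the earlier lemmas), $I_x$ is FR-asymmetric precisely when $n-qx>x$, i.e. when the minimal representative of the reciprocal coset exceeds $x$. Now
\[
n-qx = (q-1-a_3)+(q-1-a_0)q+(q-1-a_1)q^2+(q-1-a_2)q^3,
\]
and $n-qx>x$ because, adding, $(n-qx)+qx=n=q^4-1$ while $x<(q^4-1)/(q+1)$ gives $qx<q(q^4-1)/(q+1)$, hence $n-qx=n-qx>n-\tfrac{q(q^4-1)}{q+1}=\tfrac{q^4-1}{q+1}>x$. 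This chain of inequalities is the crux and it is essentially a one-line computation once the identity for $(q^4-1)/(q+1)$ is in hand. Finally, $n-qx>x$ together with the fact that $x$ is smaller than its reciprocal partner shows $x<n-qx$, and since $n-qx$ is minimal in its own coset (one must check $n-qx\le n-q^3x$, which again reduces to a digit comparison under the bound), we conclude $I_x$ is the first reciprocal asymmetric coset. The main obstacle I anticipate is bookkeeping the minimal-representative conditions cleanly for both $I_x$ and its reciprocal without circular reasoning; the safest organization is to prove $n-qx>x$ first from the numerical bound, then deduce both minimality statements as corollaries rather than establishing them independently.
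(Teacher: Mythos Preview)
The paper does not give its own proof of this lemma; it is quoted from \cite[Theorem~3.11]{QINP2}, so there is no in-paper argument to compare against. Your numerical approach via the inequality $n-qx>(q^4-1)/(q+1)>x$ is different in flavour from the digit-by-digit analysis of Lemmas~\ref{le:CuandoLCDunico}--\ref{le:CuandoLCD}, and it is the right idea for the FR-asymmetric claim. However, the proposal has two genuine gaps.

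\textbf{The minimality clause is false as stated.} Take $q=3$, so $n=80$ and $(q^4-1)/(q+1)=20$, and let $x=19$. Then $q^2x=171\equiv 11\pmod{80}$, so $I_{19}=\{11,19\}$ has minimal representative $11$, not $19$. Your hedged treatment of this point (``a short case check \ldots\ rules this out''; ``one derives a contradiction'') therefore cannot be completed: there is no valid argument because the claim fails. What the paper actually uses (Theorem~\ref{teo:gordo}, case~(1)) is only that $I_x$ is FR-asymmetric; the phrase ``and $x$ is its minimal representative'' should be read either as a hypothesis or as the harmless observation that the minimal representative of $I_x$ is $\le x$ and hence also lies below the bound.

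\textbf{The FR-asymmetric argument is incomplete.} You correctly prove $n-qx>(q^4-1)/(q+1)>x$ (note $qx<n$, so no reduction is needed), but FR-asymmetry requires that the \emph{minimal} representative of the reciprocal coset $I_{n-qx}=\{n-qx,\;n-q^3x\bmod n\}$ exceed the minimal representative of $I_x$. Your assertion that $n-qx$ is minimal in $I_{n-qx}$ is false in general: for $q=3$, $x=11$ one gets $I_{n-qx}=I_{47}=\{47,23\}$ with minimal representative $23$, not $47$. The fix is to show that \emph{both} elements of $I_{n-qx}$ exceed $x$. You have handled $n-qx$. For the other, use $q^4\equiv 1\pmod n$ to write $q^3x\equiv q^{-1}x\pmod n$; if $r=x\bmod q$ then (since $n\equiv -1\pmod q$) one has $q^{-1}x\bmod n=(x+rn)/q$, so the second element equals $\bigl((q-r)n-x\bigr)/q$, and this exceeds $x$ precisely when $x<(q-r)(q^4-1)/(q+1)$, which holds because $q-r\ge 1$. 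With both elements of the reciprocal coset strictly above $x\ge\min I_x$, the cosets are distinct and $I_x$ is FR-asymmetric.
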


Next we characterize the $SR$-asymmetric cyclotomic cosets whose minimal representative is in the interlude $[(q^4-1)/(q+1),q^3+q]_M$. 


\begin{lem}\label{le:Hasta_q3+q}
Let $x$ be the minimal representative of $I_x$, a coset with cardinality 2, with $(q^4-1)/(q+1)< x < q^3+q$. Then $I_x$ is an SR-asymmetric coset. Furthermore, there are $q^2-q$ minimal representatives of SR-asymmetric cosets in the interlude $[(q^4-1)/(q+1),q^3+q]_M$
\end{lem}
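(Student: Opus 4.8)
The plan is to make everything explicit in terms of $q$-adic expansions and then quote the characterization lemmas already proved. First I would expand the two endpoints of the interval: since $(q^4-1)/(q+1)=q^3-q^2+q-1$, its $q$-adic $4$-tuple is $(q-1,0,q-1,0)$, while $q^3+q$ has $4$-tuple $(0,1,0,1)$. Both have the shape $(a,b,a,b)$, so by Lemma \ref{lem:card1} they are cosets of cardinality $1$. Writing the value of a cardinality-one coset $(a,b,a,b)$ as $(a+bq)(q^2+1)$ shows these cosets are linearly ordered by $a+bq\in\{0,1,\dots,q^2-1\}$ with consecutive ones differing by $q^2+1$; since $(q-1,0,q-1,0)$ and $(0,1,0,1)$ correspond to $a+bq=q-1$ and $a+bq=q$, they are \emph{consecutive} cardinality-one cosets. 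Hence the minimal representatives $x$ of cardinality-two cosets with $(q^4-1)/(q+1)<x<q^3+q$ are exactly the elements of the interlude $[(q-1,0,q-1,0),(0,1,0,1)]_M$, and this interlude notation is legitimate (it is the case $b=0$ of the second type of interlude in the definition).

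Next I would apply the second part of Lemma \ref{le: CuantasXentre2} with $b=0$, which gives
$$[(q-1,0,q-1,0),(0,1,0,1)]_M=\{(i,j,q-1,0):0\le i<q,\ 0<j<q\}.$$
This set has $q(q-1)=q^2-q$ elements, which establishes the counting part of the statement. It then remains to check that every $x=(i,j,q-1,0)$ with $0\le i<q$ and $0<j<q$ gives an SR-asymmetric coset. Writing $(a_0,a_1,a_2,a_3)=(i,j,q-1,0)$, I would first use Lemma \ref{le:QuienMayor} to see that $qx>qy$: indeed $a_2=q-1\ge a_0=i$, and in the boundary case $i=q-1$ one additionally has $a_1=j\ge 1>0=a_3$. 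Thus Lemma \ref{le:CuandoLCD} (first bullet) applies, so $I_x$ is SR-asymmetric iff $a_2+a_3>q-1$, or $a_2+a_3=q-1$ and $a_1+a_2>q-1$. Here $a_2+a_3=(q-1)+0=q-1$, so the condition reduces to $a_1+a_2>q-1$, i.e.\ $j+(q-1)>q-1$, i.e.\ $j>0$, which holds by hypothesis. Hence $I_x$ is SR-asymmetric and the proof is complete.

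There is no deep obstacle here: the lemma follows by combining Lemmas \ref{lem:card1}, \ref{le:QuienMayor}, \ref{le:CuandoLCD} and \ref{le: CuantasXentre2}. The only points requiring care are arithmetic bookkeeping — getting the two $q$-adic expansions right, verifying that the endpoints are consecutive cardinality-one cosets so that Lemma \ref{le: CuantasXentre2} applies verbatim, and not overlooking the boundary subcase $i=q-1$ when determining the sign of $qx-qy$.
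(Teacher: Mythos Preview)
Your proof is correct and follows essentially the same route as the paper's: identify the endpoints as $(q-1,0,q-1,0)$ and $(0,1,0,1)$, invoke Lemma~\ref{le: CuantasXentre2} (case $b=0$) to list the $q(q-1)$ elements $(i,j,q-1,0)$, use Lemma~\ref{le:QuienMayor} to get $qx>qy$, and then apply the first bullet of Lemma~\ref{le:CuandoLCD}. Your write-up is in fact a bit more careful than the paper's, explicitly verifying that the two endpoints are \emph{consecutive} cardinality-one cosets and treating the boundary subcase $i=q-1$ when checking $qx>qy$.
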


\begin{proof}
Note that $(q^4-1)/(q+1)=(q-1,0,q-1,0)$ and $q^3+q=(0,1,0,1)$. By Lemma \ref{le: CuantasXentre2}, an element $x$ in $[(q^4-1)/(q+1),q^3+q]_M$ is given by $x=(a_0,a_1,a_2,a_3)=(i,j,q-1,0)$, with $0 \le i < q$ and $0 < j < q$. Hence, there are $q(q-1)$ elements in the interlude that are minimal representatives of cosets, we claim that all these cosets are SR-asymmetric.

In fact, $qx=(0,i,j,q-1)$ and $qy=(j,q-1,0,i)$, where $I_x = \{x,y\}$. Thus,  $qx > qy$ and by Lemma \ref{le:QuienMayor},  it holds that $a_2>a_0$ or $a_2=a_0$ and $a_1>a_3$. By Lemma \ref{le:CuandoLCD}, the result follows because $a_2 + a_3 = q-1$ and $a_1 + a_2 > q-1$ since $(q-1) + 0 = (q-1)$ and $j + (q-1) > q-1$ (because $j > 0$).
\end{proof}

Our aim in the following lemmas is to count the quantity of  SR-asymmetric cosets in an arbitrary interlude. We divide our study in two cases, for $a_2 + a_3 < q-1$ and $a_2 + a_3 \ge q-1$. These results will be used when proving the main theorem.
 
 
\begin{lem}\label{le:a2+a3<q-1}
There are $a_3 (q-a_3)$ minimal representatives of SR-asymmetric cosets with cardinality 2 in the interlude $[(a_2,a_3,a_2,a_3),(a_2+1,a_2,a_2+1,a_3)]_M$, with $a_2 + a_3 < q-1$.
\end{lem}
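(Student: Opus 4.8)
The goal is to count minimal representatives $x = (a_0, a_1, a_2, a_3)$ of SR-asymmetric cosets of cardinality $2$ lying in the interlude $[(a_2,a_3,a_2,a_3),(a_2+1,a_3,a_2+1,a_3)]_M$, under the standing hypothesis $a_2 + a_3 < q-1$. By Lemma~\ref{le: CuantasXentre2}, the elements of this interlude are exactly the tuples $(i,j,a_2,a_3)$ with either $0 \le i < q$ and $a_3 < j < q$, or $a_2 < i < q$ and $j = a_3$; all of these are automatically minimal representatives and (since $a_2+a_3<q-1$, so $(a,b,a,b)$-type tuples do not sneak in — one should check the two "diagonal" positions $(a_2,a_3,a_2,a_3)$ and $(a_2+1,a_3,a_2+1,a_3)$ are endpoints, not interior) have cardinality $2$. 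So the last two coordinates are frozen at $a_2, a_3$, and I only need to decide, for each admissible pair $(a_0,a_1) = (i,j)$, whether the coset is SR-asymmetric, and then count.

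**Applying the characterization.** The tool is Lemma~\ref{le:CuandoLCD}, which splits on whether $qx > qy$ or $qx < qy$; by Lemma~\ref{le:QuienMayor} this is governed by comparing $(a_2, a_1)$ with $(a_0, a_3)$ lexicographically. First I would dispose of the case $qx > qy$: here SR-asymmetry requires $a_2 + a_3 > q-1$, or $a_2 + a_3 = q-1$ with $a_1 + a_2 > q-1$; but our hypothesis is $a_2 + a_3 < q-1$, so \emph{no} coset with $qx > qy$ in this interlude is SR-asymmetric. Hence I only count cosets with $qx < qy$, i.e. (by Lemma~\ref{le:QuienMayor}) those with $a_2 < a_0$, or $a_2 = a_0$ and $a_1 < a_3$. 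For those, Lemma~\ref{le:CuandoLCD} says SR-asymmetric $\iff$ $a_0 + a_3 > q-1$, or $a_0 + a_3 = q-1$ and $a_2 + a_3 > q-1$; again the latter alternative is killed by $a_2 + a_3 < q-1$, so the clean criterion becomes simply $a_0 + a_3 > q-1$, i.e. $a_0 > q - 1 - a_3$, together with the condition $qx<qy$.

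**The count.** Now it is pure bookkeeping over $(a_0, a_1) = (i,j)$ ranging over the interlude. I need $i > q-1-a_3$ (equivalently $i \ge q - a_3$), and I need $qx < qy$, i.e. $a_2 < i$, or $a_2 = i$ and $j < a_3$. Since $a_2 + a_3 < q-1$ forces $a_2 < q - 1 - a_3 < q - a_3$, the constraint $i \ge q - a_3$ already implies $i > a_2$, so the condition $qx<qy$ is automatic and the "$a_2 = i$" branch never occurs. Thus an element of the interlude is the minimal representative of an SR-asymmetric coset precisely when its first coordinate $i$ satisfies $q - a_3 \le i \le q-1$, with no constraint on $j$ beyond membership in the interlude. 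Counting: for each such $i$ (there are $a_3$ of them) I must count the admissible $j$. Recalling the interlude description, when $i$ ranges with no $j=a_3$ exception active — but here $i \ge q - a_3 > a_2$, so for these $i$ \emph{both} ranges of $j$ are allowed: $a_3 < j < q$ (that is $q - 1 - a_3$ values) and $j = a_3$ (one value), giving $q - a_3$ values of $j$ in total. Hence the total is $a_3 \cdot (q - a_3)$, which is the claimed formula.

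**Main obstacle.** The computation itself is elementary; the delicate point is the interplay of the endpoint/minimality conditions with the two-branch structure of Lemma~\ref{le:CuandoLCD} and Lemma~\ref{le:QuienMayor}. The crux is observing that the hypothesis $a_2 + a_3 < q-1$ simultaneously (i) eliminates the $qx>qy$ family entirely, (ii) collapses the SR-asymmetry test in the $qx<qy$ family to the single inequality $a_0 + a_3 > q-1$, and (iii) forces $q - a_3 > a_2$ so that the inequality $a_0 + a_3 > q-1$ already entails $qx < qy$, making the two conditions non-interacting. Once this is seen the count separates cleanly as $a_3$ choices of $a_0$ times $q - a_3$ choices of $a_1$. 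I would also double-check the boundary interlude convention and that none of the counted tuples accidentally has the form $(a,b,a,b)$ (they do not, since $a_0 > a_2$).
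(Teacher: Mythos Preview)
Your proof is correct and follows essentially the same approach as the paper's: both use Lemma~\ref{le: CuantasXentre2} to parametrize the interlude, then invoke Lemma~\ref{le:CuandoLCD} together with the hypothesis $a_2+a_3<q-1$ to reduce SR-asymmetry to the single condition $a_0+a_3>q-1$, observe that this forces $a_0>a_2$ (so $qx<qy$ and the interlude constraint for $j=a_3$ are automatic), and count $a_3$ choices for $a_0$ times $q-a_3$ choices for $a_1$. If anything, your write-up is slightly more explicit than the paper's in verifying that $a_0+a_3>q-1$ already implies $qx<qy$.
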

\begin{proof}
Let $x \in [(a_2,a_3,a_2,a_3),(a_2+1,a_2,a_2+1,a_3)]_M$. By Lemma \ref{le: CuantasXentre2}, $x=(i,j,a_2,a_3)$, with  $0\le i < q$ and $a_3 < j <q$, or $ a_2 < i <q$   and $j=a_3$. Since $a_2 + a_3 < q-1$, by Lemma \ref{le:CuandoLCD} and Lemma \ref{lem:bombazo}, we have that $I_x$ is an SR-asymmetric coset with cardinality two if and only if $qx < qy$. Therefore, we have that $qx<qy$. By Lemma \ref{le:CuandoLCD}, $I_x$ is an SR-asymmetric coset if $i+a_3>q-1$ or if $i+a_3=q-1$ and $a_2+a_3>q-1$. Therefore, $I_x$ is an SR-asymmetric coset if $i + a_3 > q-1$ because we are assuming that $a_2+a_3 < q-1$.

Let $a_3< j< q$, then there are $a_3$ possible values for those $i$ such that $i+a_3 > q-1$, since $i<q$, namely $q-1-a_3 + 1, q-1-a_3+2, \ldots, q-1$. We consider now the case $j = a_3$, then there are also $a_3$ values $i$ satisfying $i+a_3 > q-1$, but we have a stronger restriction: $a_2 <i <q$ in this case, however, the assumption $a_2+a_3< q-1 $ implies that all possible values of $i$ satisfy the restriction $a_2 <i <q$ as well. Summarizing, $j$ can take $q-a_3$ possible values and, for all of them, $i$ may take $a_3$ values. Hence, there are $(q-a_3)a_3$ SR-asymmetric cosets with cardinality 2 in the given interlude.
\end{proof}

\begin{lem}\label{le:a2+a3>=q-1}
There are $q^2 - qa_3 - a_2 -1$ minimal representatives of SR-asymmetric cosets with cardinality 2 in the interlude $[(a_2,a_3,a_2,a_3),(a_2+1,a_2,a_2+1,a_3)]_M$, with $a_2 + a_3 \ge q-1$, and in the interlude $[(q-1,a_3,q-1,a_3),(0,a_3+1,0,a_3+1)]_M$ (considering $a_2=q-1$). Actually, every cyclotomic coset generated by an element in these interludes is an SR-asymmetric coset.
\end{lem}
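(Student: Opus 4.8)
The plan is to follow exactly the two-step strategy used in Lemma \ref{le:a2+a3<q-1}: first I would read off the number of elements of each interlude from the explicit description in Lemma \ref{le: CuantasXentre2}, and then I would show, via Lemmas \ref{le:QuienMayor} and \ref{le:CuandoLCD}, that in the regime $a_2+a_3\ge q-1$ \emph{every} such element is the minimal representative of an SR-asymmetric coset of cardinality $2$. Since by Lemma \ref{lem:bombazo} no cardinality-$2$ coset is symmetric, this last fact also proves the ``actually'' part of the statement, and makes the count of SR-asymmetric cosets coincide with the total size of the interlude.

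For the counting step, Lemma \ref{le: CuantasXentre2} presents the interlude $[(a_2,a_3,a_2,a_3),(a_2+1,a_3,a_2+1,a_3)]_M$ as the set of $x=(i,j,a_2,a_3)$ with $0\le i<q$ and $a_3<j<q$, or $a_2<i<q$ and $j=a_3$; this has $q(q-1-a_3)+(q-1-a_2)=q^2-qa_3-a_2-1$ elements. For the boundary interlude $[(q-1,a_3,q-1,a_3),(0,a_3+1,0,a_3+1)]_M$ (where $a_2=q-1$, so $a_2+a_3\ge q-1$ automatically) the same lemma gives the $q(q-1-a_3)$ elements $x=(i,j,q-1,a_3)$ with $0\le i<q$, $a_3<j<q$, which is the previous formula evaluated at $a_2=q-1$. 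In both interludes each listed $x$ is, by the very definition of interlude together with Lemma \ref{le: CuantasXentre2}, the minimal representative of a cardinality-$2$ coset, so only SR- versus FR-asymmetry has to be decided.

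For the SR-asymmetry I would fix one such $x=(i,j,a_2,a_3)$ and first record the constraints coming from interlude membership: $j\ge a_3$, with $j=a_3$ forcing $i>a_2$ (the case $i=a_2,\ j=a_3$ is the left endpoint, of cardinality $1$, hence excluded); in the boundary interlude moreover $j>a_3$ and $a_0=i\le q-1=a_2$. Since $|I_x|=2$ we have $qx\ne qy$, so Lemma \ref{le:QuienMayor} splits the argument into $qx>qy$ and $qx<qy$. If $qx<qy$, then necessarily $a_2<i$ (the alternative $a_2=i,\ a_3>j$ is incompatible with $j\ge a_3$), so $a_0+a_3=i+a_3\ge a_2+a_3+1\ge q$, and Lemma \ref{le:CuandoLCD} gives SR-asymmetry. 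If $qx>qy$, then $j>a_3$ (otherwise $a_2>i$, contradicting $i>a_2$), so $a_1+a_2=j+a_2\ge a_3+1+a_2\ge q$; together with $a_2+a_3\ge q-1$, Lemma \ref{le:CuandoLCD} again yields SR-asymmetry. In the boundary interlude $a_0\le a_2$ and $j>a_3$ force $qx>qy$, so one is always in the second case.

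The only place that needs a little care is the borderline equality $a_2+a_3=q-1$, which does occur (e.g. $a_3=0$ in the boundary interlude): there the first clause of Lemma \ref{le:CuandoLCD} is not available and one must instead verify the secondary strict inequality $a_1+a_2>q-1$, respectively $a_0+a_3>q-1$. The key observation that makes this automatic, as indicated above, is that belonging to the interlude already forces $j>a_3$ in the case $qx>qy$ and $i>a_2$ in the case $qx<qy$, which is precisely what turns the relevant ``$\ge q-1$'' into ``$>q-1$''. Carrying out this short case analysis completes the proof.
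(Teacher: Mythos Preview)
Your proof is correct and follows essentially the same route as the paper: you count the interlude via Lemma~\ref{le: CuantasXentre2} and then, splitting on $qx>qy$ versus $qx<qy$ with Lemma~\ref{le:QuienMayor}, apply Lemma~\ref{le:CuandoLCD} to show every element is SR-asymmetric. Your argument is in fact slightly tidier in two places: you make explicit why $qx>qy$ forces $j>a_3$ (the paper leaves this implicit), and your uniform treatment of the boundary interlude avoids the paper's separate appeal to Lemma~\ref{le:Hasta_q3+q} for the case $a_3=0$.
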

\begin{proof}

We start by proving our first statement. Let $x \in [(a_2,a_3,a_2,a_3),(a_2+1,a_2,a_2+1,a_3)]_M$. By Lemma \ref{le: CuantasXentre2}, $x=(i,j,a_2,a_3)$, with  $0\le i < q$ and $a_3 < j <q$, or $a_2 < i <q$   and $j=a_3$. We consider first the case $qx>qy$. If $a_2 + a_3 > q-1$ then $I_x$ is an SR-asymmetric coset  by Lemma \ref{le:CuandoLCD}. If $a_2 + a_3 = q-1$, $I_x$ is an SR-asymmetric coset if $j + a_2 > q-1$ by Lemma \ref{le:CuandoLCD}. Note that $j + a_2 > q-1$ if and only if $j > a_3$,  by Lemma \ref{le: CuantasXentre2}, since we are assuming that $a_2 + a_3 = q-1$ in this case. We consider now the case $qx<qy$. By Lemma \ref{le:QuienMayor}, $i >a_2$ since $j \ge a_3$. Moreover, since we assume that $a_2 + a_3 \ge q-1$ and $i > a_2$, one has that $i + a_3 > q-1$ and by Lemma \ref{le:CuandoLCD}, $I_x$ is an SR-asymmetric coset. Finally, we determine how many minimal representatives of SR-asymmetric cosets are. For every $j$ such that $a_3 < j < q$ we have $q$ possible values for $i$ (since $0\le i < q$), hence we have $(q-1-a_3)q$ minimal representatives. For $a_3=j$, we have $q-a_2-1$ minimal representatives and then the first statement of this result holds. 

We prove now the second statement. Assume that $x$ is in $[(q-1,a_3,q-1,a_3),(0,a_3+1,0,a_3+1)]_M$. By Lemma \ref{le: CuantasXentre2}, $x=(a_0,a_1,a_2,a_3)=(i,j,q-1,a_3)$, with  $0\le i < q$ and $a_3 < j <q$, and hence there are $q(q-1-a_3)$  possible values for $i$ and $j$. It remains to prove that $I_x$ is an SR-asymmetric coset for all $x$. The case $a_3=0$ follows from Lemma \ref{le:Hasta_q3+q}, hence we can assume that $a_3>0$. Moreover, we have that $qx>qy$ by Lemma \ref{le:QuienMayor}, since $a_2 = q-1 \le a_0$ and $a_1 = j > a_3$. Therefore, $I_x$ is an SR-asymmetric coset by Lemma \ref{le:CuandoLCD}, because $a_2 + a_3 = q-1 + a_3 > q-1$ (since $a_3 >0$).
\end{proof}

We can now prove the main result that gives the parameters of the EAQECCs we construct in the Hermitian case. 

\begin{teo}\label{teo:gordo}
Consider the BCH code $E_\Delta$ over the field $\mathbb{F}_{q^2}$ with extension degree 2 and length $q^4-1$, where  $\Delta = \Delta (t) = \cup_{j=0}^t  I_{m_j} \subseteq \mathcal{H}$. Set $(b_0,b_1,b_2,b_3)$ the $q$-adic expression of $m_t$. Then, considering the Hermitian inner product, the parameters of the corresponding EAQECC are 

 $$\left[\left[q^4-1,q^4-1-2\sum_{j=0}^{t}  i_{m_j}+c,\ge m_{t+1}+1;c\right]\right]_q,$$ where

\begin{enumerate}
\item If $m_t<\frac{q^4-1}{q+1}$, then $c=1$.
\item If $\frac{q^4-1}{q+1}\le m_t <q^3+q$, then $c=2+4\left(m_t - \frac{q^4-1}{q+1}\right)$.

\item If $q^3+q\le m_t$, then

\begin{dmath*}
c=1 +  4q(q-1)    + b_3^2  + \max\{0,2(b_2+b_3-q+2)-1\}   \\ \hiderel{+} 4\left( \sum_{j=1}^{b_3-1} \left[ (q-1-j)(q-j)j + (j+1)(q^2 - qj -1) - \frac{(2(q-1)-j)(j+1)}{2} \right]   \\ \hiderel{+}
\delta\left(b_2b_3(q-b_3)+ b_3(b_1-b_3)+\max\{0,b_0+b_3-q+1)\}\right)   \\ \hiderel{+}  (1-\delta)\left[ (q-1-b_3)b_3(q-b_3) + (b_2 + b_3 -q +1)((q^2-q b_3 -1) -   (b_2 + q -b_3 -2)/2) + (m_t- (b_2+b_3q+b_2q^3+b_3q^3)) \right] \right) ,\end{dmath*}$\delta$ being 1 if $b_2 + b_3 <q-1$ and $\delta = 0$ otherwise.



 

\end{enumerate}
\end{teo}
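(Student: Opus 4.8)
The plan is to apply Remark~\ref{rem:c}: the value $c$ equals the number of symmetric cosets in $\Delta$ (each counted once, since they all have cardinality one by Lemma~\ref{lem:bombazo}) plus twice the number of SR-asymmetric cosets in $\Delta$ (weighted by their cardinality, which is $1$ or $2$). The coset $I_0=\{0\}$ is always symmetric and contributes the leading $1$. The three cases of the theorem correspond to a partition of the range $[0,m_t]$ according to the thresholds $(q^4-1)/(q+1)$ and $q^3+q$, which are precisely the bounds identified in Lemmas~\ref{le:quantumpart}, \ref{le:Hasta_q3+q}.

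For case (1), $m_t<(q^4-1)/(q+1)$ forces every nonzero coset in $\Delta$ to be FR-asymmetric by Lemma~\ref{le:quantumpart}, so no coset other than $I_0$ contributes and $c=1$. For case (2), Lemma~\ref{le:quantumpart} still rules out contributions below $(q^4-1)/(q+1)=(q-1,0,q-1,0)$, while Lemma~\ref{le:Hasta_q3+q} shows that every coset with minimal representative strictly between $(q^4-1)/(q+1)$ and $q^3+q$ is SR-asymmetric; by Lemma~\ref{le: CuantasXentre2} these representatives are exactly the consecutive integers in the single interlude $[(q^4-1)/(q+1),q^3+q]_M$, so the number of such cosets in $\Delta$ is $m_t-(q^4-1)/(q+1)$. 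Since each has cardinality $2$ (it lies in an interlude), each contributes $4$, giving $c=2+4\bigl(m_t-(q^4-1)/(q+1)\bigr)$; the summand $2$ is $1$ from $I_0$ plus... actually it is cleaner to note that $(q^4-1)/(q+1)=(q-1,0,q-1,0)$ itself has cardinality one and is SR-asymmetric by Lemma~\ref{le:CuandoLCDunico} (since $a_2+a_3=q-1+0$, hmm — this needs $a_2+a_3>q-1$; one rechecks that $(q-1,0,q-1,0)$ gives $a_2+a_3=q-1$, hence symmetric, contributing $1$, so $c=1+1+4(m_t-(q^4-1)/(q+1))$, matching). This bookkeeping at the endpoints is a place to be careful.

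For case (3), which is the substance of the theorem, I would split $[0,m_t]$ into: the block below $q^3+q$ (contributing $1+4q(q-1)$ by cases (1)--(2) evaluated at $m_t=q^3+q-1$, together with the cardinality-one cosets in that range), then a sum over \emph{complete} interludes $[(a_2,a_3,a_2,a_3),(a_2+1,a_3,a_2+1,a_3)]_M$ lying entirely below $m_t$, and finally the partial interlude containing $m_t$. For each complete interlude one uses Lemma~\ref{le:a2+a3<q-1} or Lemma~\ref{le:a2+a3>=q-1} to count SR-asymmetric cardinality-two cosets, and Lemmas~\ref{lem:card1}, \ref{le:CuandoLCDunico} to account for the bounding cardinality-one coset; summing over $a_3=1,\dots,b_3-1$ (with $a_2$ running over its full range for each fixed $a_3$) produces the inner sum $\sum_{j=1}^{b_3-1}[\,\cdots]$, where the three terms inside come respectively from Lemma~\ref{le:a2+a3<q-1} (the $a_2+a_3<q-1$ portion), Lemma~\ref{le:a2+a3>=q-1} (the $a_2+a_3\ge q-1$ portion, including the wrap-around interlude at $a_2=q-1$), and the cardinality-one cosets. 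The contribution of the last, incomplete interlude is handled by the $\delta$-dichotomy: if $b_2+b_3<q-1$ ($\delta=1$) one truncates the count of Lemma~\ref{le:a2+a3<q-1} at $m_t$, splitting according to whether $a_1>b_1$, $a_1=b_1$; if $b_2+b_3\ge q-1$ ($\delta=0$) every coset up to $m_t$ in that interlude is SR-asymmetric by Lemma~\ref{le:a2+a3>=q-1}, so one counts all representatives $\le m_t$, which reduces to a closed form involving $m_t-(b_2+b_3q+b_2q^3+b_3q^3)$, the offset of $m_t$ within its interlude.

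The main obstacle is the final, partial interlude: one must carefully translate the condition $a_0+a_1q+a_2q^2+a_3q^3\le m_t$ into inequalities on $(a_0,a_1)$ given that $(a_2,a_3)=(b_2,b_3)$, and then intersect with the SR-asymmetry conditions of Lemma~\ref{le:CuandoLCD}, keeping track of the case split $qx>qy$ versus $qx<qy$ from Lemma~\ref{le:QuienMayor}, and of whether the bounding cardinality-one coset $(b_2+1,b_3,b_2+1,b_3)$ itself falls within $\Delta$. A secondary nuisance is verifying that the closed-form evaluations of the arithmetic sums $\sum_{i} (2i-q+1)$-type expressions (as in Section~\ref{se:cinco}) are correctly incorporated into the terms $(2(q-1)-j)(j+1)/2$ and $(b_2+q-b_3-2)/2$; these are routine but error-prone. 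Once all pieces are assembled and the cardinality weights ($1$ for symmetric, $2$ for SR-asymmetric cardinality one, $4$ for SR-asymmetric cardinality two) are applied, the stated formula for $c$ follows, and the parameters of the EAQECC are then immediate from Theorem~\ref{th:eaqecc} and Proposition~\ref{la8}.
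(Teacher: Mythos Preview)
Your overall strategy matches the paper's proof almost exactly: apply Remark~\ref{rem:c}, handle cases (1) and (2) via Lemmas~\ref{le:quantumpart} and~\ref{le:Hasta_q3+q} together with the symmetric cardinality-one coset at $(q^4-1)/(q+1)$, and for case (3) decompose into the initial block, complete interludes indexed by $a_3=j<b_3$, and the partial interlude at $a_3=b_3$ with the $\delta$-dichotomy.

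There is, however, one concrete misreading of the formula that would derail the execution. You claim the three summands inside $\sum_{j=1}^{b_3-1}[\,\cdots]$ correspond respectively to Lemma~\ref{le:a2+a3<q-1}, Lemma~\ref{le:a2+a3>=q-1}, and the cardinality-one cosets. In fact only the first term $(q-1-j)(q-j)j$ comes from Lemma~\ref{le:a2+a3<q-1}; the second and third terms \emph{together} are the closed form of $\sum_{i=q-1-j}^{q-1}(q^2-qj-i-1)$ arising from Lemma~\ref{le:a2+a3>=q-1} (the $-\frac{(2(q-1)-j)(j+1)}{2}$ is simply $-\sum_{i=q-1-j}^{q-1}i$). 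The cardinality-one cosets are not inside this sum at all: the paper counts them separately and their total contribution is the term $b_3^2+\max\{0,2(b_2+b_3-q+2)-1\}$, which sits \emph{outside} the factor of $4$ (since a symmetric singleton contributes $1$ and an SR-asymmetric singleton contributes $2$, not $4$). If you tried to account for cardinality-one cosets inside the $4(\cdots)$ bracket as you suggest, the weights would be wrong and the formula would not match. A related slip: the upper-bounding coset $(b_2{+}1,b_3,b_2{+}1,b_3)$ is never in $\Delta$ (it exceeds $m_t$), so there is nothing to check there; the relevant singleton at level $j=b_3$ is $(b_2,b_3,b_2,b_3)$, already covered by the $\max\{0,2(b_2+b_3-q+2)-1\}$ term.
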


\begin{proof}

By Remark \ref{rem:c}, we have that $c$ is equal to the cardinality of the symmetric cosets in $\Delta$ plus two times the cardinality of the SR-asymmetric cosets in $\Delta$. Note that the symmetric cosets of cardinality 1 contribute 1 to $c$ and the SR-asymmetric cosets of cardinality 1 contribute 2 to $c$. The SR-asymmetric cosets of cardinality 2 contribute 4 to $c$ and that there are no symmetric cosets with cardinality 2 by Lemma \ref{lem:bombazo}.

We have that the coset $I_0 = \{0\}$ is symmetric and it always contributes 1 to the value of $c$.

{\it (1)} Let $m_t<\frac{q^4-1}{q+1}$, then the results holds by Lemma \ref{le:quantumpart}.

{\it (2)} Let $(q^4-1)/(q+1)\le m_t<q^3+q=(q^4-1)/(q+1)+q^2 +1$. By Lemma \ref{lem:card1}, the cyclotomic coset generated by $(q^4-1)/(q+1)$ has cardinality 1 and it is symmetric by Lemma \ref{le:CuandoLCDunico}.

By Lemma \ref{le:Hasta_q3+q}, an element greater than $(q^4-1)/(q+1)$ and smaller than or equal to $m_t$, that is a minimal representative of a cyclotomic coset, generates an SR-asymmetric coset. Hence there are $m_t - (q^4-1)/(q+1)$ SR-asymmetric cosets. 


{\it (3)} Let $q^3+q \le m_t$, by Lemma \ref{le:Hasta_q3+q} we have $q(q-1)$ minimal representatives of SR-asymmetric cosets in the interlude $[(q^4-1)/(q+1),q^3+q]_M$, that together with the symmetric coset with one element of $(q^4-1)/(q+1)$ contribute $$4q(q-1) +1$$ to $c$.

We count first the symmetric and SR-asymmetric cosets with cardinality 1, that is $I_x=\{x\}$ with $(i,j,i,j)=x\le m_t=(b_0,b_1,b_2,b_3)$. Note that $(q-1,0,q-1,0)$ has been already considered in part (2). By Lemma \ref{le:CuandoLCDunico}, we have that $I_x$ is symmetric if $i+j = q-1$ and that $I_x$ is an SR-asymmetric coset if $i+ j > q-1$. 
\begin{itemize}
\item Let $1 \le j < b_3$, we have that $I_x$ is symmetric for $i=q-1-j$ and that is an SR-asymmetric coset for $ q- j  \le i < q$. Hence, for $1 \le j < b_3$, we have $b_3 -1 $ symmetric cosets and $\sum_{j=1}^{b_3-1} j = b_3(b_3 -1)/2$ SR-asymmetric cosets. 
\item For the case $j = b_3$, we should consider $(i,b_3,i,b_3) \le (b_0,b_1,b_2,b_3)$ that is equivalent to $i \le b_2$ by Lemma \ref{le: CuantasXentre2}. Moreover, one has that $i+b_3 \ge q-1$. Hence, there are $\max\{0,b_2+b_3-q+2\}$ possible values for $i$. Only one of them generates a symmetric coset and the others are SR-asymmetric cosets. 

\end{itemize}

Hence, the cosets with cardinality one, excepting $(q^4-1)/(q+1)$, contribute  $$b_3 -1 + b_3(b_3 -1) + \max\{0,2(b_2+b_3-q+2)-1\} = $$ $$  -1 + b_3^2 + \max\{0,2(b_2+b_3-q+2)-1\} $$ to $c$. 

Now, we focus on the rest of the SR-asymmetric cosets with cardinality 2. We first count the number of different interludes that are smaller than the interlude that contains $m_t$ and then count the number of SR-asymmetric cosets in each interlude. By Remark \ref{rem:xintervalo}, $m_t=(b_0,b_1,b_2,b_3)$ is in the interlude $[(b_2,b_3,b_2,b_3),(b_2+1,b_3,b_2+1,b_3)]_M$ if $b_2 < q-1$ and in $[(q-1,b_3,q-1,b_3),(0,b_3+1,0,b_3+1)]_M$ if $b_2 = q-1$. Thus, the interludes that we should count are of the form $[(i,j,i,j),(i+1,j,i+1,j)]_M$  or $[(q-1,j,q-1,j),(0,j+1,0,j+1)]_M$  with   $0 \le i < q$ and $0< j < b_3$, or $ 0 \le i  < b_2$  and $j = b_3$. We divide the study in two cases $j < b_3$ and $j=b_3$.
 
Let $j < b_3$. We divide again the study in two cases, the first one where we can use Lemma \ref{le:a2+a3<q-1} and the second one where we can use Lemma \ref{le:a2+a3>=q-1}.
\begin{itemize}
\item There are  $q-1-j$ interludes of type $[(i,j,i,j),(i+1,j,i+1,j)]_M$, with $i+j<q-1$. By Lemma \ref{le:a2+a3<q-1}, they contain $(q-j)j$ SR-asymmetric cosets. Hence the case $j < b_3$ and $i+j<q-1$ contributes $$4\sum_{j=1}^{b_3 -1} (q-1-j)(q-j)j$$ to $c$.


\item We consider now the opposite case,  $i+j\ge q-1$, where we know that every minimal representative in the interlude is an SR-asymmetric coset by Lemma \ref{le:a2+a3>=q-1}. Namely, there are $q^2 - qj -i -1$ elements in the interlude with first element $(i,j,i,j)$ with $0<j<b_3$ and $q-1-j \le i < q$. Therefore, we have in total $$\sum_{i=q-1-j}^{q-1} (q^2 - qj -i -1 )= (j+1)(q^2 - qj -1) - \sum_{i=q-1-j}^{q-1} i,$$which means that the case 
$j < b_3$ and  $i+j\ge q-1$ contributes $$4\left((j+1)(q^2 - qj -1) - \frac{(2(q-1)-j)(j+1)}{2}\right)$$   to $c$.
\end{itemize} 
 
Let $j = b_3$. As above, we divide the study in two cases. 


\begin{itemize}
\item We consider first the case $b_2+b_3<q-1$. Note that there are $b_2$ interludes of the form $[(i,j,i,j),(i+1,j,i+1,j)]_M$ with $0 \le i <b_2$ and $j=b_3$ that are smaller than $(b_2,b_3,b_2,b_3)$. Moreover, by Lemma \ref{le:a2+a3<q-1}, they contain $b_3(q-b_3)$ minimal representatives. 

Furthermore, we should count how many elements in the interlude containing $m_t$, $[(b_2,b_3,b_2,b_3), (b_2+1,b_3,b_2+1,b_3)]_M$ (recall Remark \ref{rem:xintervalo}), are smaller than or equal to $m_t$. In this interlude the elements are $(i,k,b_2,b_3)$ with $0 \le i < q$ and $b_3 <k<q$, or $b_2 <i <q$ and $k=b_3$. 

Those elements that are minimal representatives of an SR-asymmetric coset satisfy $i+ b_3 > q-1$ by Lemma \ref{le:CuandoLCD}, because $b_2+b_3<q-1$ (we have that $qx >qy$ does not hold if $b_2 + b_3 <q-1$). Hence, combining both restrictions we have either $q-b_3 \le i <	q$ and $b_3 <k<q$ or $\max\{b_2 + 1, q-b_3\} \le i <q$. Note that under the hypothesis that $b_2 + b_3 < q-1$, we have that $\max\{b_2 + 1, q-b_3\} = q-b_3$ holds. Hence, the restrictions for belonging to the interlude and generating an SR-asymmetric coset are either$$q-b_3 \le i < q \mbox{~and~} b_3 <k<q$$or$$q-b_3 \le i <q.$$

Moreover, these elements are smaller than or equal to $m_t$, that is, $k < b_1$ or $k=b_1$ and $i \le b_0$. From these restrictions we get (note that $b_3 \le b_1$ by Lemma \ref{le: CuantasXentre2}): 

$$q-b_3 \le i <q \mbox{~and~} b_3 <k <b_1,$$ or $$q-b_3 \le i \le b_0 \mbox{~and~} k=b_1,$$  or $$q-b_3 \le i <q \mbox{~and~} k=b_3.$$

Thus the number of possible representatives of an SR-asymmetric coset is:

$$b_3(b_1-b_3-1) + \max\{0,b_0+b_3-q+1\} + b_3  =$$ $$ b_3(b_1 -b_3)  + \max\{0,b_0+b_3-q+1\}$$
Hence, this case contributes $$4(b_2b_3(q-b_3)+ b_3(b_1-b_3)+\max\{0,b_0+b_3-q+1)\})$$ to $c$.

\item We consider now the case $b_2+b_3\ge q-1$. Note that there are $b_2$ interludes of the form $[(i,j,i,j),(i+1,j,i+1,j)]_M$ with $0 \le i <b_2$ and $j=b_3$ whose elements are smaller than $(b_2,b_3,b_2,b_3)$. For $0 \le i  < q-1 -b_3$, we have that $i + b_3 < q-1$, and by Lemma \ref{le:a2+a3<q-1}, they contain $b_3(q-b_3)$ minimal representatives, which means that the first $q-1-b_3$ interludes have a total of  $(q-1-b_3)b_3(q-b_3)$ minimal representatives of SR-asymmetric cosets. 

For the remaining cosets, $q-1-b_3 \le i < b_2$, we have that $i + b_3 \ge q-1$ holds and thus we should consider Lemma \ref{le:a2+a3>=q-1}. Every minimal representative of these interludes generates an SR-asymmetric coset and there are $q^2-q b_3 -i-1$.  Therefore, the number of minimal representatives in this case is

$$\sum_{i=q-1-b_3}^{b_2 -1} (q^2 -qb_3 -i -1) = $$

$$(b_2 + b_3 -q +1)(q^2-q b_3 -1) -  \sum_{i=q-1-b_3}^{b_2 -1} i = $$

$$(b_2 + b_3 -q +1)(q^2-q b_3 -1) -  (b_2 + q -b_3 -2)(b_2+b_3-q+1)/2.$$





Furthermore, we should count how many elements in the interlude that contains $m_t$, $[(b_2,b_3,b_2,b_3), (b_2+1,b_3,b_2+1,b_3)]_M$, are smaller than or equal to $m_t$. Since, by Lemma \ref{le:a2+a3>=q-1}, all elements in the interlude of $m_t$ are minimal representatives of an SR-asymmetric coset, we conclude that there are $m_t- (b_2+b_3q+b_2q^3+b_3q^3)$ elements.

Hence, this case contributes \begin{dmath*}(q-1-b_3)b_3(q-b_3) + (b_2 + b_3 -q +1)((q^2-q b_3 -1) -  (b_2 + q -b_3 -2)/2) + (m_t- (b_2+b_3q+b_2q^3+b_3q^3))\end{dmath*} to $c$. 
\end{itemize}

Summing all contributions to $c$ we get the formula given in the statement for this part {\it (3)}.
\end{proof}

As above, if the ideal $J$ introduced in Section \ref{se:dos} is generated by $X^{n+1}-X$, with a  very similar proof to that of the previous theorem, the following result holds.

\begin{teo}
Consider the BCH code $E_\Delta$ over the field $\mathbb{F}_{q^2}$ with extension degree 2 and length $q^4$,  where  $\Delta = \Delta (t) = \cup_{j=0}^t  I_{m_j} \subseteq \mathcal{H}$. Set $(b_0,b_1,b_2,b_3)$ the $q$-adic expression of $m_t$. Then the parameters of the corresponding EAQECC are

 $$\left[\left[q^4,q^4-2\sum_{j=0}^{t}  i_{m_j}+c,\ge m_{t+1}+1;c\right]\right]_q,$$ where 
\begin{enumerate}
\item If $m_t<\frac{q^4-1}{q+1}$, then $c=0$.
\item If $\frac{q^4-1}{q+1}\le m_t <q^3+q$, then $c=1+4\left(m_t - \frac{q^4-1}{q+1}\right)$.

\item If $q^3+q\le m_t$, then

\begin{dmath*}
c=  4q(q-1)    + b_3^2  + \max\{0,2(b_2+b_3-q+2)-1\}   \\ \hiderel{+} 4\left( \sum_{j=1}^{b_3-1} \left[ (q-1-j)(q-j)j + (j+1)(q^2 - qj -1) - \frac{(2(q-1)-j)(j+1)}{2} \right]   \\ \hiderel{+}
\delta\left(b_2b_3(q-b_3)+ b_3(b_1-b_3)+\max\{0,b_0+b_3-q+1)\}\right)   \\ \hiderel{+}  (1-\delta)\left[ (q-1-b_3)b_3(q-b_3) + (b_2 + b_3 -q +1)((q^2-q b_3 -1) -   (b_2 + q -b_3 -2)/2) + (m_t- (b_2+b_3q+b_2q^3+b_3q^3)) \right] \right) ,\end{dmath*}where $\delta=1$ if $b_2 + b_3 <q-1$ and $\delta = 0$ otherwise.



 

\end{enumerate}
\end{teo}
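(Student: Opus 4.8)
The plan is to deduce the statement directly from Theorem \ref{teo:gordo}, by isolating the unique cyclotomic coset whose status changes when the defining ideal is taken to be $X^{n+1}-X$ rather than $X^{n}-1$. First I would invoke the discussion following Remark \ref{rem:c} in Section \ref{se:tres}: adjoining the evaluation point $0$ leaves every cyclotomic coset $I_x$ with $x\neq 0$ untouched, together with its cardinality $i_x$, its reciprocal coset, and hence its classification as symmetric, FR-asymmetric or SR-asymmetric. The only coset that is modified is $I_0=\{0\}$, whose reciprocal coset becomes $I_n\not\subseteq\mathcal{H}$ by \cite{QINP2}; consequently $I_0$ passes from being symmetric to being FR-asymmetric.

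By Remark \ref{rem:c}, $c$ equals the total cardinality of the symmetric cosets in $\Delta$ plus twice the total cardinality of the SR-asymmetric cosets in $\Delta$. In the situation of Theorem \ref{teo:gordo} the coset $I_0$ contributed exactly $1$ to this count, being symmetric of cardinality one; in the present situation it contributes $0$, since an FR-asymmetric coset contributes nothing, while every other coset contributes exactly as before (indeed all the remaining summands appearing in the proof of Theorem \ref{teo:gordo} come from cosets whose minimal representative is at least $(q^4-1)/(q+1)>0$). Hence the value of $c$ obtained here equals the value of $c$ in Theorem \ref{teo:gordo} decreased by $1$, in each of the three cases --- which is precisely how the displayed formulas differ.

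It remains to record the effect on the remaining parameters, which I would do exactly as in the paragraph on evaluation at $0$ in Section \ref{se:tres}: the block length grows by one to $q^4$, the BCH bound $\geq m_{t+1}+1$ for the minimum distance is unaffected, and the dimension $q^4-2\sum_{j=0}^{t} i_{m_j}+c$ equals $(q^4-1)-2\sum_{j=0}^{t} i_{m_j}+(c+1)$, so it coincides with the dimension produced by Theorem \ref{teo:gordo}. I do not expect a genuine obstacle here: the whole machinery of interludes, $q$-adic expansions, and the case split at $(q^4-1)/(q+1)$ and $q^3+q$ transfers verbatim, and the only point that deserves an explicit word is the assertion that $I_0$ is the \emph{only} coset whose status changes --- which is immediate, since both the reciprocal map $I_x\mapsto I_{n-qx}$ and the partition of $\mathbb{Z}_n$ into cyclotomic cosets are insensitive to the adjoined coordinate.
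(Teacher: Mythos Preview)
Your proposal is correct and matches the paper's own treatment: the paper does not give a detailed proof either, stating only that ``with a very similar proof to that of the previous theorem, the following result holds,'' and the reduction you describe is precisely the general principle already spelled out in the paragraph following Remark~\ref{rem:c} in Section~\ref{se:tres}. Your explicit check that the three displayed formulas differ from those of Theorem~\ref{teo:gordo} by exactly $1$ is the only verification needed.
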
  

This family of codes contains codes with good parameters. As a sample, to finish this section and the paper, we present several tables, Tables \ref{ta:uno}, \ref{ta:dos}, \ref{ta:tres} and \ref{ta:cuatro}, containing parameters of codes obtained with our formulae, over different finite fields, that exceed the Gilbert-Varshamov (GV) bound for EAQECCs \cite{QINP3}.

\begin{table}
\begin{center}
\begin{tabular}{|c|c|c|c||c|c|c|c|}
  \hline
  $n$ & $k$ & $d\ge$  & $c$ & $n$ & $k$ & $d \ge$ & $c$ \\
  \hline
  80 & 75 & 3 & 1 & 80 & 71 & 4 & 1 \\
  80 & 67 & 5 & 1 & 80 & 63 & 6 & 1 \\
  80 & 47 & 11 & 1 & 80 & 45 & 12 & 1 \\
  80 & 41 & 13 & 1 & 80 & 37 & 14 & 1 \\
  80 & 21 & 18 & 1 & 80 & 17 & 21 & 1 \\
  80 & 16 & 22 & 2 & 80 & 16 & 23 & 6 \\
  80 & 16 & 24 & 10 & 80 & 16 & 25 & 14 \\
  \hline
\end{tabular}
\caption{Parameters of EAQECCs over $\mathbb{F}_3$}\label{ta:uno}
\end{center}
\end{table}

 \begin{table}
\begin{center}
\begin{tabular}{|c|c|c|c||c|c|c|c|}
  \hline
   $n$ & $k$ & $d\ge$  & $c$ & $n$ & $k$ & $d \ge$ & $c$ \\
  \hline
  255 & 250 & 3 & 1 & 255 & 246 & 4 & 1 \\
  255 & 242 & 5 & 1 & 255 & 238 & 6 & 1 \\
  255 & 194 & 18 & 1 & 255 & 192 & 19 & 1 \\
  255 & 176 & 23 & 1 & 255 & 172 & 24 & 1 \\
  255 & 148 & 30 & 1 & 255 & 144 & 31 & 1 \\
  255 & 136 & 35 & 1 & 255 & 134 & 36 & 1 \\
  255 & 118 & 40 & 1 & 255 & 114 & 41 & 1 \\
  255 & 81 & 54 & 6 & 255 & 81 & 55 & 10 \\
  255 & 81 & 56 & 14 & 255 & 81 & 57 & 18 \\
  255 & 81 & 64 & 46 & 255 & 81 & 69 & 50 \\
  255 & 79 & 70 & 50 & 255 & 75 & 71 & 50 \\
  \hline
\end{tabular}
\caption{Parameters of EAQECCs over $\mathbb{F}_4$}\label{ta:dos}
\end{center}
\end{table}

\begin{table}
\begin{center}
\begin{tabular}{|c|c|c|c||c|c|c|c|}
  \hline
 $n$ & $k$ & $d\ge$  & $c$ & $n$ & $k$ & $d \ge$ & $c$ \\
  \hline
   624 & 619 & 3 & 1 & 624 & 615 & 4 & 1 \\
   624 & 611 & 5 & 1 & 624 & 607 & 6 & 1 \\
   624 & 527 & 27 & 1 & 624 & 525 & 28 & 1 \\
   624 & 521 & 29 & 1 & 624 & 517 & 30 & 1 \\
   624 & 513 & 31 & 1 & 624 & 509 & 32 & 1 \\
   624 & 489 & 37 & 1 & 624 & 485 & 38 & 1 \\
   624 & 395 & 63 & 1 & 624 & 391 & 64 & 1 \\
   624 & 273 & 97 & 1 & 624 & 269 & 98 & 1 \\
   624 & 261 & 100 & 1 & 624 & 257 & 105 & 1 \\
   624 & 256 & 107 & 6 & 624 & 256 & 108 & 10 \\
   624 & 256 & 113 & 30 & 624 & 256 & 114 & 34 \\
   624 & 256 & 115 & 38 & 624 & 256 & 116 & 42 \\
   \hline
 \end{tabular}
\caption{Parameters of EAQECCs over $\mathbb{F}_5$}\label{ta:tres}
\end{center}
\end{table}

\begin{table}
\begin{center}
\begin{tabular}{|c|c|c|c||c|c|c|c|}
  \hline
 $n$ & $k$ & $d\ge$  & $c$ & $n$ & $k$ & $d \ge$ & $c$ \\
  \hline
  2400 & 2399 & 2 & 1 &  2400 & 2395 & 3 & 1 \\
  2400 & 2391 & 4 & 1 &  2400 & 2387 & 5 & 1 \\
  2400 & 2383 & 6 & 1 &  2400 & 2379 & 7 & 1 \\
  2400 & 2207 & 51 & 1 &  2400 & 2205 & 52 & 1 \\
  2400 & 2201 & 53 & 1 &  2400 & 2197 & 54 & 1 \\
  2400 & 2141 & 68 & 1 &  2400 & 2137 & 69 & 1 \\
  2400 & 1919 & 126 & 1 &  2400 & 1915 & 127 & 1 \\
  2400 & 1911 & 128 & 1 &  2400 & 1907 & 129 & 1 \\
  2400 & 1907 & 130 & 1 &  2400 & 1899 & 31 & 1 \\
  2400 & 1829 & 152 & 1 &  2400 & 1825 & 153 & 1 \\
  2400 & 1781 & 164 & 1 &  2400 & 1777 & 165 & 1 \\
  2400 & 1737 & 175 & 1 &  2400 & 1733 & 176 & 1 \\
  \hline
\end{tabular}
\caption{Parameters of EAQECCs over $\mathbb{F}_7$}\label{ta:cuatro}
\end{center}
\end{table}

Most of the EAQECCs in the literature (see Section \ref{se:uno}) are binary or $q$-ary with length smaller than or equal to $q^2$. Hence, we cannot compare our codes with them. Other articles about BCH codes consider just concrete subfamilies. To the best of our knowledge, the only article with EAQECCs having the same length as ours is \cite{Quian2}. There the authors provide a few codes with length $q^4-1$. Indeed, for a given finite field $\mathbb{F}_q$, they give  two  codes, with parameters $[[q^4-1, q^4 -1 - 5, 3; 1]]_q$ and $[[q^4-1, q^4 -1 - 7, 4; 1]]_q$.  These codes are constructed from almost MDS constacyclic codes using the Hartmann-Tzeng bound. We notice that those with minimum distance 3 are contained in the set of codes presented in this section but the ones with minimum distance 4 are not contained in the codes presented in this section, and have better parameters.  Notice also that the cyclotomic cosets presented in this section are consecutive to bound the minimum distance with the BCH bound but the codes with miminum distance 4 in \cite{Quian2} are constructed from non-consecutive cyclotomic cosets.


\end{document}